\documentclass[11pt,onecolumn]{article}
\usepackage[utf8]{inputenc}
\usepackage{amssymb}
\usepackage{amsthm}
\usepackage{amsmath}
\usepackage{graphicx}
\usepackage{color,soul}
\usepackage{xcolor}
\usepackage{empheq}
\usepackage[many]{tcolorbox}
\usepackage{mathtools}
\usepackage{subfig}
\usepackage{stmaryrd}
\usepackage{enumitem}
\usepackage[vlined,linesnumbered,ruled,resetcount]{algorithm2e}

\usepackage[left=1.0in,top=1.0in,right=1.05in,bottom=0.5in,nohead,textheight=10in,footskip=0.3in]{geometry}

\theoremstyle{plain}
\newtheorem{theorem}{Theorem}[section]

\newtheorem{definition}[theorem]{Definition}
\newtheorem{lemma}[theorem]{Lemma}

\newtheorem{corollary}[theorem]{Corollary}

\newtheorem{remark}[theorem]{Remark}

\newenvironment{myproblem}[2]{\vspace{5pt}
\noindent {\bf Problem #1} (#2). \em 
}{\vspace{5pt} \\ \indent }

\let\oldnl\nl
\newcommand{\nonl}{\renewcommand{\nl}{\let\nl\oldnl}}
\newcommand{\eqdef}{{\stackrel{\mbox{\tiny \tt ~def~}}{=}}}

\SetKwProg{Procedure}{Procedure}{}{}
\SetKwRepeat{Do}{do}{while}


\usepackage{hyperref}

\makeatletter
\newcommand{\customlabel}[2]{%
   \protected@write \@auxout {}{\string \newlabel {#1}{{#2}{\thepage}{#2}{#1}{}} }%
   \hypertarget{#1}{#2}
}
\makeatother

\long\def\ignore#1{}

\def\Def{{\tt{def}}}
\def\mdef{{\widetilde{\tt{def}}}}
\def\indeg{{\tt{indeg}}}
\def\outdeg{{\tt{outdeg}}}

\def\head{{\tt{head}}}
\def\tail{{\tt{tail}}}

\def\calS{{\cal S}}

\def\calM{{\cal M}}

\def\calG{{\cal G}}

\def\calD{{\cal D}}
\def\calX{{\cal X}}

\newcommand{\C}[1]{{\langle #1 \rangle}}
\newcommand{\CC}[1]{{\langle\!\langle #1 \rangle\!\rangle}}
\newcommand{\CCd}[1]{{\langle\!\langle #1 \rangle\!\rangle_d}}
\newcommand{\CCzero}[1]{{\langle\!\langle #1 \rangle\!\rangle_0}}
\newcommand{\CCone}[1]{{\langle\!\langle #1 \rangle\!\rangle_1}}

 \pagenumbering{gobble}

\title{Faster algorithms for packing forests in graphs \\ and related problems}
\author{Pavel Akrhipov \hspace{30pt} Vladimir Kolmogorov \\ \normalsize Institute of Science and Technology Austria \\ {\normalsize\tt $\{$pavel.arkhipov,vnk$\}$@ist.ac.at}}
\date{}

\date{}

\begin{document}

\maketitle

\begin{abstract}
We consider several problems related to packing forests in graphs.
The first one is to find $k$ edge-disjoint forests in a directed graph $G$ of maximal size such that the indegree of each vertex in these forests is at most $k$. We describe a min-max characterization for this problem and show that it can be solved in almost linear time for fixed $k$, extending the algorithm of [Gabow, 1995]. Specifically, the complexity is $O(k \delta m \log n)$, where $n, m$ are the number of vertices and edges in $G$ respectively, and $\delta = \max\{1, k - k_G\}$, where $k_G$ is the edge connectivity of the graph.
Using our solution to this problem, we improve complexities for two existing applications:

(1) {\sc $k$-forest} problem: find $k$ forests in an {\em undirected} graph $G$ maximizing the number of edges in their union. We show how to solve this problem in $O(k^3 \min\{kn, m\} \log^2 n + k \cdot{\rm MAXFLOW}(m, m) \log n)$ time, breaking the $O_k(n^{3/2})$ complexity barrier of previously known approaches.

(2) {\sc Directed edge-connectivity augmentation} problem: find a smallest set of directed edges whose addition to the given directed graph makes it strongly $k$-connected. We improve the \emph{deterministic} complexity for this problem from $O(k \delta (m+\delta n)\log n)$ [Gabow, STOC 1994] to $O(k \delta m \log n)$. A similar approach with the same complexity also works for the undirected version of the problem.

\end{abstract}

\newpage
\pagenumbering{arabic}

\section{Introduction}\label{sec:intro}

In this paper, we will work with an unweighted graph $G=(V,E)$ that contains $n$ vertices and $m$ edges with parallel edges allowed.
We will consider both directed and undirected graphs; the type of $G$ will be specified when needed.
Let $k=O({\tt poly}(n))$ be a given positive integer.
Let $k_G$ be the edge connectivity of $G$, i.e. the maximum integer such that $G$ is $k_G$-connected. If $G$ is directed, we will say ``$k_G$-connected'' meaning ``strongly $k_G$-connected''. We define $\delta = \max\{1, k - k_G\}$.
For a directed graph $G=(V,E)$ and a subset $S\subseteq V$
let $\rho(S)$ be the set of edges in $G$
going from $V-S$ to $S$, and for a subset $F\subseteq E$ let $\indeg_F(v)$ be the indegree of node $v$ in the graph $(V,F)$.

This paper studies algorithms for packing forests in $G$,
i.e.\ computing edge-disjoint forests $F_1, \ldots, F_k \subseteq E$ in $G$ that maximize the size of their union $F = F_1 \sqcup \ldots \sqcup F_k$, subject to certain constraints. We treat forests as their edge sets. If $G$ is directed, then $F_i\subseteq E$ is a forest if it is acyclic in the undirected sense.
Packing forests in $G$ is intimately related to connectivity problems on $G$. Fix a node $a\in V$ in a directed graph $G$. An {\em $a$-cut} is a non-empty subset $A\subseteq V-\{a\}$.
By the classical result of Edmonds~\cite{Edmonds:69,Edmonds:72},
$|\rho(A)|\ge k$ for all $a$-cuts if and only if
$G$ has a {\em complete $k$-intersection for $a$}, i.e.\
disjoint \emph{spanning} trees $F=F_1\sqcup \ldots\sqcup F_k$
such that $\indeg_F(a)=0$ and $\indeg_F(v)=k$ for all $v\in V-\{a\}$\footnote{The version of the theorem with $k$ edge-disjoint arborescences rooted at $a$ is perhaps more well-known.}.
If $G$ is undirected, then one could convert it to a directed graph $\vec G$
by replacing each edge of $G$ with two directed edges in opposite directions,
and then apply the Edmonds' result to $\vec G$; clearly, the connectivity of $G$ equals
the cost of a minimum $a$-cut in $\vec G$ for any $a\in V$.

An efficient algorithm for computing a complete $k$-intersection $F_1, \ldots, F_k$
has been given by Gabow~\cite{gabow}. Its complexity is $O(km\log n)$
for general directed graphs, and  $O(km\log(n^2 / m))$ for directed graphs without parallel edges. 
Gabow iteratively grows the forests until they become spanning trees.
If $G$ does not have a complete $k$-intersection then Gabow's algorithm terminates
and outputs an $a$-cut $A$ with $|\rho(A)|\le k-1$.
It is natural to ask what happens if one does not terminate Gabows algorithm upon a discovery of a cut: whether (i) one can extract more information about $G$ and the collection of forests $F_1, \ldots, F_k$, and (ii) whether these forests can be used for some applications.
These questions are the main topic of this paper.
Let us  consider the following problem, which will be further used as a subroutine in two applications.

\begin{myproblem}{\customlabel{the_one_problem}{{\sc A}}}{\sc Bounded Indegree $k$-Forest}
    Given an input $(G,k,\tau)$ where $G$ is a directed graph and $\tau$ is a vector $V\rightarrow\{0,1,\ldots,k\}$,
    find $k$ edge-disjoint forests $F_1, \ldots, F_k$, such that $\indeg_F(v) \leq k-\tau(v)$ for each $v\in V$,
    and $|F| = |F_1\sqcup \ldots\sqcup F_k|$ is maximized.
\end{myproblem}
For a node $a\in V$, let $\tau^{a:k}$ be the vector with $\tau^{a:k}(a)=k$ and $\tau^{a:k}(v)=0$ for $v\in V-\{a\}$.
In these terms, the Edmonds' result can be reformulated as follows: the cost of a minimum $a$-cut in $G$ is at least $k$ if and only if an optimal solution $F$ 
of Problem~\ref{the_one_problem} for  $(G,k,\tau^{a:k})$
satisfies $|F|=nk-k$.

We will describe an efficient algorithm for solving Problem~\ref{the_one_problem}
for the cases $\tau={\bf 0}$ and $\tau = \tau^{a:k}$, extending an algorithm by \cite{gabow}
\footnote{
Our algorithm can be generalized to work for arbitrary vectors $\tau:V\rightarrow\{0,1,\ldots,k\}$ with the same complexity, but the analysis would require some extra work.
For the sake of clarity and brevity, we focus on the special case $\tau={\bf 0}$, which is sufficient for the applications that we consider.}.
\begin{theorem}\label{th:the_one_problem:alg}
There exists a deterministic algorithm for Problem~\ref{the_one_problem} with $\tau={\bf 0}$ that runs in $O(k \delta m \log n)$ time.
Consequently, the problem can be solved in the same time for $\tau=\tau^{a:k}$ by applying the algorithm for $\tau={\bf 0}$ to the graph obtained from $G$ by removing all edges pointing to~$a$.
\end{theorem}
When we will use algorithm from the theorem above in other applications, we will need a description of a structure of the optimal solution. Thus, we establish the following min-max characterization for arbitrary $\tau$.
A {\em subpartition of $V$} is a collection $\calX=\{X_1,\ldots,X_t\}$ 
of subsets of $V$ that are pairwise-disjoint. For a subset $A\subseteq V$ we denote $\tau(A)=\sum_{v\in A}\tau(v)$.
\begin{theorem}\label{th:minmax:kforests}
If $F$ is an optimal solution of Problem \ref{the_one_problem} then
\begin{equation}\label{eq:minmax:kforests:intro}
nk-\tau(V)-|F|=\max_{\raisebox{-5pt}{\mbox{\small $\calX$: a subpartition of $V$}}} \;\; \sum_{A\in\calX}(k-\tau(A)-|\rho(A)|)
\end{equation}
\end{theorem}
Note that applying this theorem to vector $\tau=\tau^{a:k}$ yields the Edmonds' result stated earlier.
Indeed, the LHS of eq.~\eqref{eq:minmax:kforests:intro} is zero if and only if $G$ has a complete $k$-intersection for $a$,
while the RHS is zero if and only if $|\rho(A)|\ge k$ for all $a$-cuts $A$.

In the next two subsections we show that the algorithm from Theorem~\ref{th:the_one_problem:alg}
can be used to improve complexities for two existing applications.

\subsection{$k$-{\sc forest} problem}

\begin{myproblem}{\customlabel{problem_k_forest}{{\sc B}}}{$k$-{\sc forest}}
    Given an input $(G,k)$ where graph $G$ is undirected, find $k$ edge-disjoint forests $F_1, \ldots, F_k$ in $G$ such that $|F| = |F_1\sqcup \ldots\sqcup F_k|$ is maximized.
\end{myproblem}
Without loss of generality, we will assume that $G$ is connected (and thus, $n$ is $O(m)$). Otherwise, it is enough to solve the problem in each connected component of $G$. 

A related problem is $k$-{\sc spanning tree}:
find $k$ edge-disjoint spanning trees in $G$, if they exist. Clearly, any algorithm for $k$-{\sc forest}
can be used to solve $k$-{\sc spanning tree}.
These problems have been extensively studied. It is a special case of matroid union;
for example, $k$-{\sc forest} asks to find a maximum independent set
in the matroid which is a union of $k$ graphic matroids.
Thus, it can be solved in polynomial time with matroid theory techniques.
For general matroids on $m$ elements, the matroid union problem can be solved using
$\widetilde{O}(km \sqrt{p})$ independence oracle
queries, where $p$ is the size of the optimal solution~\cite{terao}. (For graphic matroids one has $p=O(kn)$.)
We refer to~\cite{terao} for algorithms with other trade-offs, and a for good literature review.

Table~\ref{table:lit} lists papers that specialized matroid union algorithms to graphic matroids.
The current fastest algorithms are 
due to Gabow-Westermann~\cite{GabowWestermann:92}, which offer various trade-offs
for parameters $(k,n,m)$, 
and due to Blikstad et al.~\cite{blikstad} and Quanrud~\cite{quanrud},
which both run in $\widetilde O(m+(kn)^{3/2})$ time.
All these complexities include a factor of $O_k(n^{3/2})$,
which seems to be a barrier for current approaches.
In particular, the complexity jumps from $O(m)$ for $k=1$
(in which case we need to find a spanning tree)
to $\widetilde O(m+n^{3/2})$ for $k=2$ (when we need to find two disjoint spanning trees, in the case of the $k$-{\sc spanning tree} problem).

When speaking of tree packings and forest packings in undirected graphs, it is appropriate to mention the following classical structural results. The tree-packing theorem by Tutte \cite{Tutte} shows that a graph has $k$ edge-disjoint spanning trees if and only if every partition $\mathcal{P}$ of the vertex set has at least $k(|\mathcal P|-1)$ crossing edges. In a related result, Nash-Williams proved that a graph can be decomposed into $k$ edge-disjoint forests if and only if $k \geq \frac{|E(G[S])|}{|S| - 1}$ for any vertex subset $S$ with at least two vertices.

In this paper we establish the following result.
\begin{theorem}
\begin{sloppypar}
There exists a deterministic algorithm for
Problem~\ref{problem_k_forest} that runs in $O(k^3\min\{kn,m\}\log^2n + k\cdot{\rm MAXFLOW}(m, m) \log n)$ time, where ${\rm MAXFLOW}(n, m)$ is the complexity of a max flow computation in a graph with $n$ vertices and $m$ edges. 
\end{sloppypar}
\end{theorem}
With the recent almost linear time maxflow algorithms \cite{almost_linear_maxflow, deterministic-max-flow} this becomes $O(k^3\min\{kn,m\}\log^2n + k m^{1+o(1)})$.
Thus, we break the $O_k(n^{3/2})$ barrier for fixed $k$ and sufficiently sparse graphs (with $m=o(n^{3/2-o(1)})$).

\begin{table}[!h]
\center
\small
\begin{tabular}{|l|c@{\hspace{4pt}}l|}
\hline
paper &  & complexity \\ \hline 
Imai~\cite{Imai:83} & $\dag$ & $O(k^2 n^2)$ \\
                 &  & $O(k^2 n m)$ \\
Roskind-Tarjan~\cite{RoskindTarjan:85} & $\dag$ & $O(k^2n^2)$ \\
Gabow-Stallman~\cite{GabowStallman:85} & $\dag$ &  $O(k^{3/2} n^{1/2} m)$ if $m=\Omega(n\log n)$ \\
Gabow-Westermann~\cite{GabowWestermann:92} &  & $\widetilde O(\min\{
	\; k^{3/2} n m^{1/2} \; ,
	\; k^{1/2}  m^{3/2} \; ,
	\; k n^2 \; ,
	\; m^2 / k \; 
	\})$ \\
Blikstad-Mukhopadhyay-Nanongkai-Tu~\cite{blikstad} &  & $\widetilde{O}(m + (kn)^{3/2})$ \\
Quanrud~\cite{quanrud} &  & $\widetilde{O}(m + (kn)^{3/2})$ \\
{\bf this work} &  & $O(k^3\min\{kn,m\}\log^2n + k m^{1+o(1)})$ \\
\hline
\end{tabular}
\caption{Algorithms for $k$-{\sc forest} and $k$-{\sc spanning tree} problems. Those that handle only the latter problem are marked with ``\,$\dag$\,''.
The exact complexity of the algorithm in~\cite{GabowWestermann:92} is  
$O(\min\{ \; k^{3/2} n' \sqrt{m'} \; , \; k (n')^2 \log k\;\})$ where $n' = \min\{n, 2m / k\}$, $m' = m + n' \log n'$.
}\label{table:lit}
\end{table}

\paragraph{Applications}
The $k$-{\sc forest} problem has many applications; the following list comes from~\cite{GabowWestermann:92}.
In the analysis of
electrical networks, the solution for $k = 2$ (called {\em extremal} trees \cite{KK}) is central
to hybrid analysis (it gives the minimum fundamental equations of the network
\cite{OIW,IF}). The $k$-{\sc forest} problem also arises in the study of rigidity of structures.
For $k = 2$, it determines if a graph is rigid as a bar-and-joint framework on the
surface of a cylinder in three-dimensional space $\mathbb R^3$ \cite{Wh}. For $k = d(d + 1)/2$, it
determines if a graph is rigid as a bar-and-body framework \cite{Tay} and also as a
body-and-hinge framework \cite{Wh} in $\mathbb R^d$. For arbitrary $k$, it determines rigidity of
bar-and-joint frameworks on the flat torus in $\mathbb R^k$ \cite{Wh} and also rigidity of $k$-frames
\cite{WW}.

We also point out that packing spanning trees can be applied to routing problems. For a $k$-connected graph, one can have a $(\lfloor k / 2 \rfloor - 1)$-resilient routing table \cite{routing} based on packing arborescences in $\Vec{G}$. However, there exist $\lfloor k / 2 \rfloor$ disjoint spanning trees in $G$ by Nash-Williams theorem, which can be used for routing to achieve the same resiliency.

\subsection{Edge connectivity augmentation problem}
Our second application does not explicitly involve packing trees, but turns out to be closely related to Problem~\ref{the_one_problem}.

\begin{myproblem}{\customlabel{problem_augmentation}{{\sc C}}}{\sc Directed Edge Connectivity Augmentation}
    Given an input $(G,k)$ where graph $G$ is directed, find a smallest set of directed edges whose addition to $G$ makes $G$ strongly $k$-connected.
\end{myproblem}

\vspace{-20pt}

\begin{myproblem}{\customlabel{undirected_problem_augmentation}{{\sc C$'$}}}{\sc Undirected Edge Connectivity Augmentation}
    Given an input $(G,k)$ where graph $G$ is undirected, find a smallest set of undirected edges whose addition to $G$ makes $G$  $k$-connected.
\end{myproblem}
When stating complexities for these problems, we will assume that $m=\Omega(n)$
(to simplify expressions); if it is not the case, then $m$ would need to be replaced by $m+n$ in the runtimes.

A polynomial-time algorithm for Problem~\ref{problem_augmentation}
was first given by Frank~\cite{frank}, who also provided a min-max characterization.
The complexity of Frank's algorithm has been improved by Gabow in~\cite{Gabow:STOC91} and in \cite{Gabow:STOC94}.
The latter algorithm runs in $O(k \delta (m+\delta n)\log n)$ time. \cite{Gabow:STOC94} also presented a strongly polynomial algorithm 
with complexity $O(n^2 m\log (n^2/m))$  
(assuming that parallel edges are represented by a single edge with a weight, i.e.\ for directed weighted graphs).

Polynomial-time algorithms for Problem~\ref{undirected_problem_augmentation}  have been given in~\cite{watanabe-nakamura,CaiSun:89,NaorGusfieldMartel:89,Gabow:STOC91,Gabow:STOC94,NagamochiIbaraki:97,BenczurKarger:00,monte_carlo_augmentation}.
The best known deterministic algorithms have complexity $O(k \delta (m+\delta n)\log n)$ \cite{Gabow:STOC94}
and $\tilde O(mn)$ \cite{NagamochiIbaraki:97}, while the fastest randomized (Monte-Carlo) algorithm runs in $\tilde O(m)$~\cite{monte_carlo_augmentation}.

We will show that Problem~\ref{problem_augmentation} is closely related to Problem~\ref{the_one_problem} with vector $\tau={\bf 0}$.
Using this connection, we will then establish the following result,
which improves on the $O(k \delta (m+\delta n)\log n)$ complexity of deterministic algorithms in \cite{Gabow:STOC94}.
\begin{theorem}
There exists a deterministic algorithm for Problems~\ref{problem_augmentation} and~\ref{undirected_problem_augmentation}
with complexity $O(k \delta m \log n)$.
\end{theorem}

\section{High-level overviews}

Throughout the paper
we use the following notation. If $E$ is a set of directed edges, then $E^{\tt rev}$ is the set obtained
from $E$ by reversing edge orientations. Similarly, for a directed graph $G=(V,E)$ we denote $G^{\tt rev}=(V,E^{\tt rev})$.
For a subset $A\subseteq V$ we define the following sets of edges:
\begin{subequations}
\begin{eqnarray}
\rho(A)&=& E(V-A, A) = \{(u,v)\in E\::\:u\notin A,v\in A\} \label{eq:rhoA:def} \\
\rho^{\tt rev}(A)&=& E(A, V-A) = \{(u,v)\in E\::\:u\in A,v\notin A\} \label{eq:rhorevA:def}  \\
E(A)&=& E(G[A]) = \{(u,v)\in E\::\:u\in A,v\in A\} \label{eq:lambda:def} 
\end{eqnarray}
\end{subequations}
Note that definition~\eqref{eq:lambda:def} can be used for both directed and undirected graphs,
while~\eqref{eq:rhoA:def} and~\eqref{eq:rhorevA:def} are applicable only to directed graphs.
For singleton sets we will write $\rho(v)$, $\rho^{\tt rev}(v)$, $A+v$, $A-v$ instead of
$\rho(\{v\})$, $\rho^{\tt rev}(\{v\})$, $A\cup\{v\}$, $A-\{v\}$.

\textbf{The main result of this paper} is a description of a way to use the \textsc{Bounded Indegree $k$-Forest} problem as a subroutine to solve the \textsc{$k$-Forest} problem and the \textsc{Directed Edge Connectivity Augmentation} problem. \textbf{The main insight} that enables these connections is a dual min-max characterization of the optimal solution to the \textsc{Bounded Indegree $k$-Forest} problem. \textbf{The main difficulty}, however, is the proof that the \textsc{Bounded Indegree $k$-Forest} problem can be solved in nearly linear time in $m$. At first glance, it is very natural to expect that Gabow-type acceleration will work in this problem, with almost the same argument. However, it is not nearly true, and new, heavier arguments have to be developed, as we will discuss further.

Next, we give a high-level overview of ideas for our solutions to Problems~\ref{the_one_problem},~\ref{problem_k_forest} and~\ref{problem_augmentation}. Some proofs are moved to Appendices to meet the volume requirements.

\subsection{Overview of Problem~\ref{the_one_problem}: {\sc Bounded Indegree $k$-Forest}} 
Our algorithm for Problem~\ref{the_one_problem} is
very close to the Gabow's algorithm from~\cite{gabow} for computing a complete $k$-intersection.
As in~\cite{gabow}, we grow forests one-by-one, where forest $F_k$ is grown by repeating the following operation:
pick node $v$ that has a {\em deficit} (i.e.\ its current indegree in $F$ is smaller than $k$)
and search for an augmenting path from $v$ until we hit another connected component of $F_k$.
This path is then augmented, which merges two connected components of some forest $F_i$.
The algorithm preserves the nestedness of connected components of forests $F_1,\ldots,F_k$.

The main challenge here is the analysis of this algorithm. In the Gabow's case
it holds directly by construction that forests $F_1,\ldots,F_{k-1}$ are spanning trees,
and each component of the last forest~$F_k$ has exactly one deficient vertex. This ensures
that each component of $F_k$ is processed at most once in each round, and each round halves
the number of components of $F_k$.

If $G$ does not have a complete $k$-intersection then forests $F_1,\ldots,F_{k-1}$
may no longer be spanning trees, and so the structural property above does not hold anymore.
Without a bound on the number of deficits in each component the complexity becomes much worse.
This difficulty might have been the reason why this rather natural generalization of the Gabow's algorithm
have not appeared in the literature before.

Our main contribution here is establishing a structural result that gives a guarantee of at most~$k$
deficient vertices per  component of $F_k$. This leads to the $O(k \delta m \log n)$ runtime  
claimed in Theorem~\ref{th:the_one_problem:alg}.

\subsection{Overview of Problem~\ref{problem_k_forest}: {\sc $k$-forest}}

For this problem, $G$ is undirected. Notice that if one introduces any orientation on $G$, then the optimal value for Problem \ref{the_one_problem} (with $\tau = {\bf 0}$) on this oriented graph is not greater than the optimal value of Problem \ref{problem_k_forest}, because Problem \ref{the_one_problem} has additional indegree constraints. 

On the other hand, suppose, an optimal solution $F_1, \ldots, F_k$ to Problem \ref{problem_k_forest} is known. If we orient $G$ such that each component of each $F_i$ becomes oriented away from some root chosen in this component, then the optimal values for Problem \ref{problem_k_forest} and Problem \ref{the_one_problem} will coincide, since $F_1, \ldots, F_k$ would be a feasible solution to Problem \ref{the_one_problem}.

So, under a suitable orientation of a part of $G$, one can get an optimal solution of Problem \ref{problem_k_forest} as an optimal solution of Problem \ref{the_one_problem}. We build an iterative process that finds such orientation. Each iteration consists of the following steps:
\begin{enumerate}
    \item Find a maximal subgraph $P \subseteq G$ that contains the current solution $F$ and admits an orientation satisfying $\indeg_P (v) \leq k$ for all $v \in V$. This can be done in one max flow computation on a graph with $O(m)$ edges.

    \item Solve Problem \ref{the_one_problem} in this subgraph. This updates the current set of forests.

    \item Contract any set $U \subseteq V$ such that each $F_i$ is spanning inside $U$. This ensures that progress will be made if the current solution is not optimal.
\end{enumerate}

We analyze this process and show that it converges in $O(k \log n)$ iterations, yielding the complexity bound $O(k^3 \min\{kn, m\} \log^2 n + k \cdot{\rm MAXFLOW}(m, m) \log n)$.

\subsection{Overview of Problem~\ref{problem_augmentation}:  {\sc Directed Edge Connectivity Augmentation}} 
The idea of the algorithm is based on the following min-max characterization by Frank~\cite{frank}.
Below a subpartition $\calX$ of $V$ is called {\em proper} if $\calX\ne\{V\}$.
\begin{theorem}[{\cite[Theorem 3.1]{frank}}]
Let  $\gamma (G, k)$ be the optimal value of Problem~\ref{problem_augmentation},
i.e.\ the minimum number of directed edges that should be added to $G$ to make it strongly $k$-connected. Then
    \begin{equation}
        \gamma(G, k) = \max \left\{ \alpha_{\tt in}(G, k), \alpha_{\tt out}(G, k) \right\},
    \end{equation}
    where 
    \begin{eqnarray}
	\alpha_{\tt in}(G, k)&=&\max_{\raisebox{-5pt}{\mbox{\small $\calX$: a proper subpartition of $V$}}} \;\; \sum_{A\in\calX}(k-|\rho(A)|) \label{eq:minmax:augmentation:intro} \\
	\alpha_{\tt out}(G, k)&=&\max_{\raisebox{-5pt}{\mbox{\small $\calX$: a proper subpartition of $V$}}} \;\; \sum_{A\in\calX}(k-|\rho^{\tt rev}(A)|) 
	\;\;=\;\; \alpha_{\tt in}(G^{\tt rev}, k)
    \end{eqnarray}
    \label{theorem_frank}
\end{theorem}
One can now see a connection between Problem~\ref{the_one_problem} with zero vector $\tau={\bf 0}$ and Problem~\ref{problem_augmentation}:
the right-hand sides of~\eqref{eq:minmax:kforests:intro} and \eqref{eq:minmax:augmentation:intro}
look almost identical, except that in~\eqref{eq:minmax:kforests:intro} the maximization is over
all subpartitions $\calX$ while in~\eqref{eq:minmax:augmentation:intro} it is over {\bf proper}
subpartitions.\footnote{Condition $\mathcal{X} \neq \{V\}$ is missing
    in the formulation of~\cite[Theorem 3.1]{frank}, but is explicitly stated in \cite[Theorem 63.1]{schrijver-book:B}.}
This means, in particular, the following: if we compute an optimal solution $F$ for $(G,k,{\bf 0})$
and this solution happens to contain a disconnected forest (equivalently, if $|F|<nk-k$)
then $\alpha_{\tt in}(G,k)=nk-|F|>k$, since the optimal subpartition in~\eqref{eq:minmax:kforests:intro} cannot be $\{V\}$.
 Otherwise we have $\alpha_{\tt in}(G,k)\le k$,
and some other method is needed.

In the general case we solve Problem~\ref{the_one_problem} for $(G,k,\tau^{a:k})$
and then do some simple postprocessing to the solution.
We no longer claim that this computes $\alpha_{\tt in}(G,k)$;
instead, we claim to correctly compute $\gamma(G,k)=\max \left\{ \alpha_{\tt in}(G, k), \alpha_{\tt out}(G, k) \right\}$.

Once $\gamma(G,k)$ is computed, we need to compute the actual set of edges to be added.
As in previous work, we do it by calling an {\em edge splitting} procedure
in a specially constructed  graph; details will be given in Section~\ref{sec:directed}.

In the next sections we give detailed descriptions of our solutions to Problems~\ref{the_one_problem},~\ref{problem_k_forest},~\ref{problem_augmentation} and~\ref{undirected_problem_augmentation}.
Proofs that are missing in these sections can be found in Appendices~\ref{appendix_directed_problem} and~\ref{appendix_k_forests}.


%




\section{Algorithm for Problem~\ref{the_one_problem}}\label{sec:directed_problem}
In this section we discuss an algorithm that, given a directed graph $G=(V,E)$, computes a set of edges $F=F_1\sqcup\ldots\sqcup F_k\subseteq E$
of maximum cardinality such that $F_1,\ldots,F_k$ are edge-disjoint forests of $G$,
and $\indeg_F(v)\le k-\tau(v)$ for each node $v\in V$, where
$\tau$ is a fixed vector $V\rightarrow\{0,1,\ldots,k\}$. 
Note, whenever we write $F=F_1\sqcup\ldots\sqcup F_k$,
we assume with some abuse of notation that integer $k$ and forests $F_1,\ldots,F_k$
can be recovered from~$F$.

The problem can  be stated in matroid terms as follows.
Let $\calG$ be the graphic matroid corresponding to $G$,
and let $\calG^k$ be the $k$-fold union of $\calG$.
The independent sets of matroid $\calG^k$ correspond to the subgraphs of $G$ that decompose into $k$ forests. 
Let $\calD$ be the matroid on the edge set of $G$, where $H\subseteq E$ is independent if $\indeg_H(v) \leq k-\tau(v)$ for each vertex $v$. 
Problem \ref{the_one_problem} then asks to find a set $F\in\calG^k\cap\calD$
of maximum cardinality.
Clearly, it can be solved in polynomial time
by applying an algorithm for the matroid intersection problem.
It requires  oracles that test for a given $F\subseteq E$ whether $F\in\calG^k$ and $F\in\calD$;
the first one can be implemented using a matroid union algorithm.

To get a good complexity, we will use a more direct approach.
In Section~\ref{sec:kforests:augmentations} we describe a generic scheme for solving the problem,
which is based on performing augmentations in a certain auxiliary graph.
In Section~\ref{sec:min-max} we prove some useful properties of an optimal solution.
Then in Section~\ref{sec:kforests:alg} we present an efficient way of organizing these augmentations (for $\tau={\bf 0}$).

We will use the following terminology and notation. 
The edges of $E - F$ will be called \textit{uncovered}. If edge $e \notin F_i$ can be added to $F_i$ without creating any cycles
(i.e.\ it joins different components of $F_i$), we will call $e$ \textit{joining for} $F_i$.
The \textit{deficit} of a vertex $v$ in $F=F_1\sqcup \ldots \sqcup F_k$ is defined as
    \begin{equation}
        \Def_F(v) = k - \tau(v) - \indeg_F(v) 
    \end{equation}
Vertices with $\Def_F(v)>0$ are called \textit{deficient}.
For a subset $A\subseteq V$, we also denote $\Def_F(A)=\sum_{v\in A}\Def_F(v)$.

\subsection{Auxiliary graph and augmentations}\label{sec:kforests:augmentations}

\begin{definition}[auxiliary graph]
    For given forests $F_1, \ldots, F_k$ with $F=F_1\sqcup\ldots\sqcup F_k$, the auxiliary graph $D(F)$ is a directed graph with a vertex set $V\sqcup E \sqcup \{t\}$, and the following edges (below $v\in V$, $e,f\in E$):
    \begin{enumerate}
        \item $(v, e)$ is an edge if $\indeg_F(v) < k-\tau(v)$, $\head(e)=v$ and $e \notin F$.
        \item $(e, t)$ is an edge if $e$ is joining for some $F_i$.
        \item $(e, f)$ with $f\notin F$ is an edge if $e \in F$ and $\head(e) = \head(f)$.
        \item $(f, e)$ with $e\in F$ is an edge if $e \in F_i$, $f \notin F_i$ and $F_i - e + f \in \mathcal{G}$ for some $F_i$.
    \end{enumerate}
\end{definition}

Let $P = (v, e_1, \ldots, e_r, t)$ be a $V$-$t$ path in the auxiliary graph. We will need the procedure of \textit{augmentation}\footnote{This ``augmentation" has nothing to do with ``augmentations" in the Augmentation Problem \ref{problem_augmentation}, but both are conventional terminology. We hope this will not cause confusion.}.
We assume below that solution $F=F_1\sqcup\ldots\sqcup F_k$ 
is represented by labels $\ell(e)\in [k]\cup\{\varnothing\}$ for edges $e\in E$,
where $\ell(e)=i$ if $e\in F_i$, and $\ell(e)=\varnothing$ if $e\notin F$.

\begin{algorithm}[!h]
    \DontPrintSemicolon
    \SetNoFillComment

    $p \gets \min\{j \; | \; e_r \text{ is joining for } F_j \}$ \\
    let $(\ell_1,\ell_2,\ldots,\ell_r)$ be the labels of $(e_1,\ldots,e_r)$ in $F$ (then $\ell_1=\varnothing$) \\
    update $F$ by setting the labels of these edges to $(\ell_2,\ldots,\ell_r,p)$
    
    \caption{{\tt Augment}($P$: $(v, e_1, \ldots, e_r, t)$ -- a $V$-$t$ path).}
    \label{proc_augment}
\end{algorithm}

\begin{theorem}\label{th:proc_augment}
    {\rm (a)} Algorithm~\ref{proc_augment} 
    does not change the sizes of forests $F_i$ for $i \neq p$, and increases the size of $F_p$ by one.
    Furthermore, it increases $\indeg_F(v)$ by 1, and does not change indegrees of other vertices. \\
{\rm (b)} If $P$ is a $V$-$t$ path in the auxiliary graph without shortcuts, then augmenting $P$ produces feasible forests $F'=F'_1\sqcup\ldots\sqcup F'_k$
(with $|F'|=|F|+1$). \\
{\rm (c)}   If the auxiliary graph does not have $V$-$t$ paths, then $F$ is optimal.
\end{theorem}

Theorem~\ref{th:proc_augment} implies that Problem \ref{the_one_problem} can be solved 
by repeatedly finding and augmenting paths. 

\subsection{Min-max characterization}\label{sec:min-max}
In this section, we establish some properties of optimal solutions of Problem~\ref{the_one_problem}.

\begin{definition}\label{def:closed}
Consider forests $F=F_1\sqcup \ldots\sqcup F_k$. A subset $A\subseteq V$ is {\em $F$-closed} if $\rho(A)\subseteq F$
and each $F_i$ is a spanning tree in $A$ (i.e.\ subgraph $(A,F_i\cap E(A))$ is connected).
\end{definition}
Such $A$ satisfies $\indeg_F(A)=|\rho(A)|+k(|A|-1)$. This yields the following equation:
\begin{equation}\label{eq:Fclosed}
\Def_F(A)=k-\tau(A)-|\rho(A)| \qquad\quad \forall A\subseteq V,\mbox{ $A$ is $F$-closed}
\end{equation}

\begin{lemma} \label{lem_q}
{\rm (a)} Let $F=F_1\sqcup \ldots\sqcup F_k$ be a feasible solution of Problem \ref{the_one_problem},
and suppose that there is no $v$-$t$ path in the auxiliary graph for $F$ for some node $v$ with $\Def_F(v)>0$.
Then there exists an $F$-closed set $Q_v\ni v$ which is contained in some component of forest $F_i$ for every $i\in[k]$.

\noindent {\rm (b)} Let $F=F_1\sqcup \ldots\sqcup F_k$ be an optimal solution of Problem \ref{the_one_problem}. 
    There exists a subpartition $\calS$ of $V$ with the following properties:
    (i) Every set $A\in\calS$ is $F$-closed, satisfies $\Def_F(A)>0$,
    and is contained in some component of forest $F_i$ for every $i\in[k]$.
    (ii) Every vertex $v\in V$ with $\Def_F(v)>0$ belongs to some $A\in\calS$.
\end{lemma}

\begin{corollary}[Min-max characterization]\label{cor:minmax:kforests}
If $F$ is an optimal solution of Problem \ref{the_one_problem} then,
\begin{equation}\label{eq:minmax:kforests}
\Def_F(V)=\max_{\raisebox{-5pt}{\mbox{\small $\calX$: a subpartition of $V$}}} \;\; \sum_{A\in\calX}(k-\tau(A)-|\rho(A)|)
\end{equation}
Furthermore, the maximum in~\eqref{eq:minmax:kforests} is achieved by subpartition $\calX=\calS$ constructed in Lemma \ref{lem_q}.
\end{corollary}

Note that Corollary~\ref{cor:minmax:kforests} implies Theorem~\ref{th:minmax:kforests} since $\Def_F(V)=nk-\tau(V)-|F|$.

\subsection{Main algorithm for Problem~\ref{the_one_problem} (with $\tau={\bf 0}$)}\label{sec:kforests:alg}

Throughout this subsection, $\tau = {\bf 0}$.

We will store a collection of forests $F=F_1\sqcup \ldots\sqcup F_k$, and increase $|F|$ by augmenting $V$-$t$ paths. Initially, $F_k$ is empty.

We also maintain the \textit{nestedness condition}: for any $i \in \{1, \ldots, k - 1\}$, $F_{i + 1} \subseteq {\tt span}(F_i)$. 
This is equivalent to the following:
for any $j > i$ and any component $C \subseteq V$ of $F_j$, there exists a component $C' \subseteq V$ of $F_i$ such that $C \subseteq C'$.


One can observe that after executing {\tt Augment}($P$), two components of $F_p$ get merged, and all the other components of each forest remain the same. 
(Here  $p$ is the index chosen at line 1 of Algorithm~\ref{proc_augment}). Moreover, {\tt Augment}($P$) preserves the nestedness condition.

Recall that an augmenting path can start only at a vertex with $\indeg_F(v)<k$, or equivalently with $\Def_F(v)>0$.
We will maintain an integer $\mdef(v)\in\{\Def_F(v),0\}$ for each $v\in V$,
where condition $\mdef(v)=0$ indicates that either $\Def_F(v)=0$, or
searching for an augmenting path from $v$ has previously failed.
The following lemma shows that
augmenting paths from such $v$ will never appear again,
so it suffices to search for augmenting paths only from vertices $v$ with $\mdef(v)>0$.
\begin{lemma}
    \label{lem_hopeless_once_hopeless_forever}
    If there is no $v$-$t$ path in the auxiliary graph for a given vertex $v$, then, after any valid augmentation, there still will be no $v$-$t$ path.
\end{lemma}
\begin{proof}
    By Lemma \ref{lem_q}, if there are no augmenting paths starting from $v$, then there exists a set $Q \subseteq V$, $v \in Q$, such that each forest is a spanning tree inside $Q$, and $\rho(Q) \subseteq F$. Then $\indeg_F(Q)$ cannot increase. Indeed, all the edges entering $Q$ are already covered, and the number of covered edges inside $Q$ cannot be increased without creating a cycle in some forest. Since augmentations do not decrease indegrees of any vertex, indegree of any vertex in $Q$ (including $v$) cannot increase.
\end{proof}

For a subset $A\subseteq V$, we will denote  
$\mdef(A)=\sum_{v\in A}\mdef(v)$.
We can now describe the {\tt Search} procedure (see Algorithm~\ref{alg_search}).
Note that it decreases $\mdef(v)$ by at least one, and does not change $\mdef(u)$ for nodes $u\neq v$.

\begin{algorithm}[!h]
    \DontPrintSemicolon
    \SetNoFillComment

    $v \gets $ any  vertex in $C$ with $\mdef(v)>0$ \\
    search for a $v$-$t$ augmenting path $P=(v,e_1,\ldots,e_r,t)$  \\
    \tcc{if  exists, then we must have $e_1,\ldots,e_{r-1}\in E(C)$ and $e_r\notin E(C)$ }
    \If{\text{search is successful}} {
        {\tt Augment}($P$) \\
        decrease $\mdef(v)$ by 1 to preserve equality $\mdef(v)=\Def_F(v)$
    } \Else{
         $\mdef(v)\leftarrow 0$
    }
    
    \caption{{\tt Search}($C$: component of $F_k$ with $\mdef(C)>0$).}
    \label{alg_search}
\end{algorithm}

Using a standard {\em cyclic scanning search} approach~\cite{GabowWestermann:92,gabow}, 
Algorithm~\ref{alg_search} can be implemented in time $O(|E(C)|)$,
plus the time to compute $p=\min\{j \; | \; e_r \text{ is joining for } F_j \}$ in the augmentation procedure if the search is successful.
We refer to \cite[Section 2.3-2.5]{gabow} for a detailed description of the cyclic scanning search
(the fact that Gabow deals with spanning trees for $F_1, \ldots, F_{k-1}$ does not change the validity of this approach).

The reason why Gabow's algorithm works in almost linear time in $m$ is because he always has exactly one deficit per connected component of the last forest. Then, one round of searching for augmenting paths reduces the number of the components by half. Since all the other forests are already spanning trees, their size does not change. After packing the next tree, the indegree of each vertex (except the root) increases by exactly one. In our case, however, the distribution of deficient vertices is much less nice. In general, there exist components of the last forest with zero deficit (therefore, we cannot do a search in such components at all), or with deficit greater than one. This is due to the fact that all the forests may increase their size, not only the last forest. A case that would be very bad for complexity is if some linear-size component of the last forest also has linear-size deficit. Then we potentially would have to do linear-time searches in this component a lot of times. The following lemma bounds the deficit of a component during our algorithm, showing that the bad case does not happen, which is crucial to the complexity analysis of our algorithm.

\begin{theorem}
    Let $\widetilde{F}=\widetilde{F}_1\cup \ldots\cup \widetilde{F}_{k-1}$ be an optimal solution of Problem \ref{the_one_problem} for $k - 1$, 
    and suppose that it obeys the nestedness condition and $\Def_{\widetilde{F}}(C) \leq k - 1$ for each component $C$ of $\widetilde{F}_{k-1}$. Consider Problem \ref{the_one_problem} for $k$ with initial configuration equal to $\widetilde{F}_1, \ldots, \widetilde{F}_{k-1}$ and the empty $k$-th forest. Then, after any sequence of augmentations, each component $C$ of $F_k$ satisfies $\Def_F(C) \le k$.
    \label{theorem_low_deficit}
\end{theorem}

Now we can describe the final algorithm (see Algorithm~\ref{alg_solve}).

\begin{algorithm}[!h]
    \DontPrintSemicolon
    \SetNoFillComment

    if $k>1$ then recursively compute an optimal solution $F_1, \ldots F_{k - 1}$ of Problem \ref{the_one_problem} for $k - 1$ \\
    $F_k \gets$ empty forest, $\mdef(V) \gets \Def(V)$ \\
    \While{$\mdef(V) > 0$ and $F_k$ is not a spanning tree}{
        $\mathcal{C} \gets$ set of all components $C$ of $F_k$ with $\mdef(C)>0$ \\
        \For{$C \in \mathcal{C}$}{
            {\tt Search}($C$) \\
            If $C$ got merged with some $B \in \mathcal{C}$, delete $B$ from $\mathcal{C}$
        }
    }
    
    \caption{Solving Problem \ref{the_one_problem} for a given graph $G$ and value $k\ge 1$.}
    \label{alg_solve}
\end{algorithm}

The algorithm is correct, because it keeps looking for an augmenting path until there are no augmenting paths.
Note, if $F_k$ becomes a spanning tree then we can stop -- then all forests must be spanning trees
and so $F$ is optimal.

\begin{theorem}\label{th_complexity_main_algo}
    Algorithm~\ref{alg_solve} can be implemented to run in time $O(k \delta m \log n)$.
\end{theorem}


\section{Algorithm for Problem~\ref{problem_k_forest}}\label{sec:problem_k_forest}
Besides the algorithm for Problem~\ref{the_one_problem}, we will need two more subroutines.
They are described in Sections~\ref{subsection_pseudoforests} and~\ref{sec:clump}.

\subsection{Procedure {\tt Pseudoforests} 
}\label{subsection_pseudoforests}
In this subsection, let $G$ be an undirected graph with $n$ vertices and $m$ edges. Let $H$ be a subgraph of $G$. Moreover, suppose, $H$ has been oriented in such a way that $\indeg_H(v) \leq k$ for any vertex $v$. Consider the following problem: find a subgraph $P \subseteq G$ with the maximum number of edges such that $H \subseteq P$, and the edges of $P$ can be oriented in such a way that $\indeg_P(v) \leq k$ for any vertex $v$. Output $P$ with the orientation on $P$. The orientation on $P$ does not necessarily induce the same orientation on $H$ as $H$ had in the input. We will call this a pseudoforest problem, since a graph $P$ can be oriented such that $\indeg_P(v) \leq k$ for any vertex $v$ if and only if $P$ can be covered by $k$ pseudoforests. The problem can be solved with a single maximum flow computation on some auxiliary graph, that appeared in {\cite{gabow_polymatroids}}.

\begin{theorem}\label{th_pseudoforests_via_maxflow}
    There exists an algorithm {\tt Pseudoforests}$(G, H, k)$ that solves the pseudoforest problem in $O({\rm MAXFLOW}(m, m))$ time, where ${\rm MAXFLOW}(n, m)$ is the complexity of a max flow computation in a graph with $n$ vertices and $m$ edges.
\end{theorem}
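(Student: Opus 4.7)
The plan is to reduce the problem to a max flow computation on (essentially) the graph $G^*$. The key correspondence is standard: any integer $s$-$t$ flow in $G^*$ of value $q$ determines the subset $P = \{e \in E : (s,e) \text{ is saturated}\} \subseteq E$ (with $|P| = q$), together with an orientation of $P$ (the head of each $e \in P$ is the unique endpoint $u$ such that $(e, u)$ carries one unit of flow); this orientation satisfies $\indeg_P(v) \le k$ at every $v$ because of the capacity of $(v, t)$. Conversely, any orientation of $P \subseteq E$ with $\indeg_P \le k$ induces an integer $s$-$t$ flow of value $|P|$. Hence the max flow value in $G^*$ equals the maximum size of such a $P$.

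To enforce $H \subseteq P$, I would impose the lower bound $l_{(s,e)} = 1$ on every edge $(s, e)$ with $e \in H$. The given orientation of $H$ certifies feasibility of this lower-bounded problem: routing one unit along $s \to e \to \head(e) \to t$ for each $e \in H$---where $\head(e)$ denotes the head of $e$ under the input orientation---yields a valid initial flow, using $\indeg_H(v) \le k$ to respect the capacity of $(v, t)$. The max flow problem with lower bounds reduces in the classical way (introducing an auxiliary source/sink pair) to a standard max flow on a graph with $O(m)$ vertices and $O(m)$ edges, yielding the claimed $O({\rm MAXFLOW}(m, m))$ running time. The output $P$ together with its orientation is read off directly from the resulting flow.

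Optimality follows from matroid theory: the family of subgraphs of $G$ admitting an indegree-$\le k$ orientation forms the independent sets of a matroid on $E$ (the union of $k$ pseudoforest matroids). Since $H$ is independent in this matroid (witnessed by the input orientation), it extends to a basis, so the maximum $|P|$ with $H \subseteq P$ equals the unrestricted matroid rank, which is precisely the max flow value of the lower-bounded problem. The main subtlety worth emphasizing is that the apparently simpler reduction of \emph{pre-committing} the input orientation (for instance by deleting the nodes $e \in H$ from $G^*$ and lowering each $(v, t)$ capacity to $k - \indeg_H(v)$) can be strictly suboptimal, since an optimal $P$ may require reversing some $H$-edges; the lower-bound formulation avoids this pitfall by constraining only the membership $H \subseteq P$, which is consistent with the problem statement's stipulation that the output orientation on $H$ need not match the input.
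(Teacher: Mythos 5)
Your proof is correct and uses the same auxiliary graph $G^*$ and the same flow/orientation correspondence as the paper, but it enforces the constraint $H \subseteq P$ by a different mechanism. The paper seeds $G^*$ with the flow corresponding to the input orientation of $H$ and then runs max flow on the residual network, handling the worry that the algorithm might push flow back along the residual arcs $(e,s)$, $e \in H$, by deleting those arcs and asserting in a footnote that this does not lower the optimal flow value. You instead impose lower bounds of $1$ on the arcs $(s,e)$ for $e \in H$ and invoke the classical lower-bounds-to-max-flow reduction, which is the same constraint implemented a bit more formally. More substantively, you make explicit the matroid-theoretic reason why constraining $H \subseteq P$ (equivalently, forbidding the back-arcs or imposing lower bounds) does not decrease the achievable $|P|$: the indegree-$\le k$-orientable subgraphs are exactly the independent sets of the union of $k$ pseudoforest matroids, and $H$ being independent extends to a basis. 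The paper leaves this point at the level of the footnote's unproved assertion, so your argument fills a small gap rather than taking a different route; both versions run in $O(\mathrm{MAXFLOW}(m,m))$ time since the auxiliary network has $O(m)$ vertices and edges either way.
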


\subsection{Contractions and top clumps}\label{sec:clump}

In this subsection, $G=(V,E)$ is once again undirected. 
Our algorithm will rely on the operation of {\em contraction}, as described in the following lemma.

\begin{lemma}\label{lemma:contract}
Let $F_1,\ldots,F_k$ be a feasible solution (of the $k$-{\sc forest} problem) for $G$
and $U$ be a subset of $V$ such that each $T_i=F_i\cap E(U)$
is a spanning tree of $(U, E(U))$.
Let $G'$ be the graph obtained from $G$ by contracting
$U$ to a single node and deleting self-loops. \\
\noindent {\rm (a)} $F_1-E(U),\ldots,F_k-E(U)$ is a feasible solution for $G'$. \\
\noindent {\rm (b)} If $F'_1,\ldots,F'_k$ is an optimal solution for $G'$ then $F'_1\sqcup T_1,\ldots,F'_k\sqcup T_k$ is an optimal solution for~$G$.\!\!\!\!
\end{lemma}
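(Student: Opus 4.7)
My plan is to reduce both parts to a single elementary contraction fact: if $T$ is a spanning tree of the induced subgraph $(U, \lambda(U))$ and $S$ is an edge set of $G$ disjoint from $\lambda(U)$, then $S \cup T$ is a forest in $G$ if and only if $S$, viewed as a subset of edges of $G'$, is a forest in $G'$. This is immediate because contracting the spanning tree $T$ is the same as contracting the vertex set $U$, and the contraction of a sub-forest of a forest is again a forest.

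For part (a), I take $T = T_i$ and $S = F_i \setminus \lambda(U)$. Since $F_i = S \sqcup T_i$ is a forest in $G$ by hypothesis, $F_i \setminus \lambda(U)$ is a forest in $G'$ by the fact above. Edge-disjointness of $F_1 \setminus \lambda(U), \ldots, F_k \setminus \lambda(U)$ in $G'$ is inherited from edge-disjointness of $F_1, \ldots, F_k$ in $G$, since contraction is a bijection on edges not in $\lambda(U)$. For the feasibility half of part (b), I run the same fact in the opposite direction: $F'_i$ is a forest in $G'$ by hypothesis, so $F'_i \sqcup T_i$ is a forest in $G$. Edge-disjointness of $F'_1 \sqcup T_1, \ldots, F'_k \sqcup T_k$ follows from edge-disjointness of the $F'_i$'s in $G'$, edge-disjointness of the $T_i$'s inherited from the $F_i$'s, and the fact that $E \setminus \lambda(U)$ and $\lambda(U)$ are disjoint.

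It remains to show optimality in (b). The total size of the proposed solution is $\sum_i |F'_i| + k(|U|-1) = \mathrm{OPT}(G') + k(|U|-1)$, so it suffices to show $\mathrm{OPT}(G) \leq \mathrm{OPT}(G') + k(|U|-1)$. For this I would invoke the matroid-union rank formula: for any graph $H=(W,D)$, the maximum $k$-forest size equals $\min_{A \subseteq D}\bigl[|D \setminus A| + k(|W| - c(W, A))\bigr]$, where $c(W,A)$ is the number of connected components of $(W,A)$; in particular, every such expression is an upper bound on the maximum $k$-forest size. Let $A^* \subseteq E'$ be a minimizer for $G'$ and set $A = A^* \cup \lambda(U) \subseteq E$. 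Since $(U, \lambda(U))$ is connected (it contains the spanning tree $T_1$), the vertex set $U$ forms a single connected component within $(V, A)$, so $c(V, A) = c(V', A^*)$. Using $|V'| = n - |U| + 1$ and $|E \setminus A| = |E' \setminus A^*|$, a direct substitution yields $|E \setminus A| + k(n - c(V,A)) = |E' \setminus A^*| + k(|V'| - c(V', A^*)) + k(|U|-1) = \mathrm{OPT}(G') + k(|U|-1)$, giving the required upper bound on $\mathrm{OPT}(G)$.

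The feasibility claims are one-line applications of the contraction fact, so the genuine content lies in the optimality inequality. The main obstacle is therefore invoking (or reproving, if one wishes to avoid a citation) the matroid-union rank formula for $k$-forests; once that is available, the remaining derivation is a routine substitution. It is worth noting that this optimality argument uses the hypothesis on the $T_i$ only through the connectedness of $(U, \lambda(U))$; the full $k$-fold strength of the hypothesis is instead used in the feasibility half to guarantee that the proposed solution actually exists as an independent object.
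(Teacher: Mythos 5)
Your proof is correct and takes essentially the same approach as the paper's: part (a) and the feasibility half of (b) are handled by the elementary contraction fact (which the paper dismisses as ``straightforward''), and the optimality in (b) is proven by taking a minimizer $A^*$ of the matroid-union rank formula for $G'$ and plugging the test set $A^* \cup \lambda(U)$ into the formula for $G$ — exactly what the paper does with its set $H^* + \lambda(U)$. If anything, your component-counting step ($c(V,A)=c(V',A^*)$, counting isolated vertices, because contracting the connected subgraph $(U,\lambda(U))$ preserves component count) is stated a bit more cleanly than the paper's ``$c$ or $c+1$'' remark, but the underlying argument is identical.
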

\noindent Note that in the lemma we treat edges $E'$ of $G'$ 
as a subset of edges $E$ of $G$ (via a natural one-to-one correspondence between $E-E(U)$ and $E'$).

Next, we discuss how to compute sets $U$ for given solution $F_1,\ldots,F_k$
that satisfy the precondition of Lemma~\ref{lemma:contract}.
For that we will need the notion of {\em clumps}.
They were introduced in \cite{GabowWestermann:92} for arbitrary matroids. 
Here we specialize them for graphic matroids.

\begin{definition}
Consider undirected graph $(V,F)$ and positive integer $k$.
A set of edges $L\subseteq F$ is called a \textit{clump of $F$}
if $L$ can be covered by $k$ forests and 
$|L| = k \cdot \text{rk}(L)$, where $\text{rk}(L)$ is the number of edges in a spanning forest in $L$.
Equivalently, $L$ is a clump 
if each connected component of $L$ can be partitioned into $k$ spanning trees.
A {\em top clump of $F$} is an inclusion-maximal clump of~$F$.
\end{definition}

Our algorithm will use the following operation:
given current feasible solution $F_1,\ldots,F_k$ in $G$,
compute a top clump $L$ of $F_1\sqcup\ldots\sqcup F_k$,
find connected components $U_1,\ldots,U_q$ of $L$,
and contract them consecutively as described in Lemma~\ref{lemma:contract}.
Luckily, this can be done efficiently due to results of Gabow and Westermann 
(see the paragraph before Theorem 5.2 and Theorem 5.4 in~\cite{GabowWestermann:92}).
\begin{lemma}[\cite{GabowWestermann:92}]
Suppose that $F=F_1\sqcup\ldots\sqcup F_k$ where $F_1,\ldots,F_k$ are edge-disjoint forests.\\
{\rm (a)} A top clump of $F$ is unique. \\
{\rm (b)} Given $F_1,\ldots,F_k$, the top clump of $F$ can be computed in $O(kn \log n)$ time.
\end{lemma}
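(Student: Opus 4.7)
My plan is to treat the two parts separately: part (a) follows from a short matroid-submodularity calculation, while part (b) is algorithmic and will hinge on a union-find / dynamic-tree implementation of the Gabow--Westermann clump routine.

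For part (a), I will show the stronger statement that the set of clumps of $F$ is closed under union, which immediately implies uniqueness of the inclusion-maximal one. Let $L_1, L_2 \subseteq F$ be clumps. Because every subset of $F$ is coverable by $k$ forests, $|S| \leq k \cdot \text{rk}(S)$ holds for $S = L_1 \cup L_2$ and $S = L_1 \cap L_2$. Combined with inclusion-exclusion $|L_1| + |L_2| = |L_1 \cup L_2| + |L_1 \cap L_2|$, submodularity $\text{rk}(L_1) + \text{rk}(L_2) \geq \text{rk}(L_1 \cup L_2) + \text{rk}(L_1 \cap L_2)$, and the clump identities $|L_i| = k \cdot \text{rk}(L_i)$, this chain of inequalities must collapse to equalities, forcing in particular $|L_1 \cup L_2| = k \cdot \text{rk}(L_1 \cup L_2)$. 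So $L_1 \cup L_2$ is itself a clump, and the union of all clumps of $F$ is the unique top clump.

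For part (b), the first step is to recast the clump condition in a form suitable for computation. Since $F$ is coverable by $k$ forests, every vertex subset $U \subseteq V$ satisfies $|F \cap \lambda(U)| \leq k(|U|-1)$, with equality iff each $F_i$ restricted to $(U,\lambda(U))$ is a spanning tree of $U$; and by part (a), the connected components of the top clump, viewed as vertex sets, are exactly the pairwise disjoint, inclusion-maximal such ``tight'' subsets. Computing the top clump thus reduces to finding these maximal tight sets.

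For the algorithm, I would follow Gabow--Westermann: process each connected component of $F$ in turn, maintain a partition of its vertex set in a union-find structure initialized to singletons, and scan the edges of $F$ (of which there are at most $k(n-1)$), merging two blocks whenever the spanning-tree requirement can be preserved in every $F_i$. The key data-structural device is a balanced dynamic tree (e.g., a Sleator--Tarjan link-cut tree) on each $F_i$, supporting the necessary LCA / path-connectivity queries in amortized $O(\log n)$. The main obstacle is bounding the amortized cost of merges and their associated consistency checks; this is where the careful invariant of Gabow--Westermann is needed. Correctness, however, is not in doubt: part (a) guarantees that clumps are closed under union, so a greedy merge can never overshoot the top clump. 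Summing $O(\log n)$ work over the $\leq k(n-1)$ edges yields the claimed $O(kn \log n)$ running time.
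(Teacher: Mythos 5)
The paper does not prove this lemma; it merely cites Gabow--Westermann (pointing to the paragraph before their Theorem~5.2 and to Theorem~5.4), so there is no internal proof to compare against.

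Your part~(a) is a correct, self-contained argument. Every $S\subseteq F$ satisfies $|S|\le k\cdot\mathrm{rk}(S)$ because $S=\bigsqcup_i (S\cap F_i)$ decomposes $S$ into $k$ forests, each of size at most $\mathrm{rk}(S)$; combining this with submodularity of $\mathrm{rk}$ and the modular identity $|L_1|+|L_2|=|L_1\cup L_2|+|L_1\cap L_2|$ does force all inequalities to collapse, so $L_1\cup L_2$ (and $L_1\cap L_2$) are again clumps, and the top clump is the union of all clumps. This is the standard lattice-of-tight-sets argument, and it is cleaner than anything explicit in the present paper.

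Part~(b) has a genuine gap, and you flag it yourself: ``this is where the careful invariant of Gabow--Westermann is needed.'' The sketch --- maintain a union-find partition, scan the $O(kn)$ edges of $F$, and ``merge two blocks whenever the spanning-tree requirement can be preserved in every $F_i$'' --- does not specify a well-defined merge criterion (an edge joining two blocks does not by itself tell you whether merging keeps every $F_i\cap\lambda(U)$ a spanning tree), nor does it say what order the edges are processed in, nor why a greedy merge cannot get stuck short of the top clump even though it also cannot overshoot it. The claimed $O(kn\log n)$ bound from ``$O(\log n)$ per edge'' therefore rests entirely on a data-structural invariant you have not stated. Since the paper itself only cites this result, deferring to Gabow--Westermann is arguably acceptable, but then (b) should be phrased as a citation rather than as a proof sketch; as written it reads like a proof but omits exactly the step that makes the algorithm work. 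One smaller point worth tightening in your preamble to (b): the claim that the components of the top clump, viewed as vertex sets, are exactly the maximal tight sets needs the observation that for any component $U$ of the top clump $L$ one has $F\cap\lambda(U)=L\cap\lambda(U)$ --- otherwise an extra edge of $F$ inside $U$ would give $|L\cup\{e\}|>k\cdot\mathrm{rk}(L\cup\{e\})$, contradicting the subcardinality bound. You use this implicitly but do not say it.
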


We will need one more technical result. We say that a feasible solution $F=F_1\sqcup\ldots\sqcup F_k$
in $G$ is {\em clump-free} if the top clump of $F$ is empty. The next lemma will be needed in the analysis of the main algorithm.
\begin{lemma}\label{lemma:clump-free}
Let $G'$ and $F'$ be the graph and the solution obtained after contracting
all connected components $U_1,\ldots,U_q$ of the top clump $L$ of $F$
as described in Lemma~\ref{lemma:contract}.
Then $F'$ is clump-free.
\end{lemma}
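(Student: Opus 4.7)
The plan is to argue by contradiction: suppose $F'$ has a nonempty clump $L'$ in $G'$, and derive a contradiction with the inclusion-maximality of the top clump $L$ of $F$. Under the natural identification of $E'$ with $E\setminus\bigsqcup_j\lambda(U_j)$, the sets $L$ and $L'$ become disjoint subsets of $F$ (since $L\subseteq\bigsqcup_j\lambda(U_j)$ by the component structure of $L$); set $\widetilde L:=L\sqcup L'$. The goal is to show that $\widetilde L$ is itself a clump of $F$ in $G$, which, together with $L'\neq\emptyset$, would give $\widetilde L\supsetneq L$ and contradict the maximality of $L$.

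First I would describe the component structure of $\widetilde L$ in $G$. Because each $U_j$ is connected by edges of $L$, every connected component $C$ of $(V(\widetilde L),\widetilde L)$ is either a single $U_j$ untouched by $L'$, or arises from a unique component $C'$ of $(V(L'),L')$ in $G'$ by expanding each contracted vertex $u_j\in V(C')$ back to the whole set $U_j$. Using $|L\cap\lambda(U_j)|=k(|U_j|-1)$ (since $L$ is a clump with components $U_j$) and the fact that $C'$ has $k(|V(C')|-1)$ edges (since $L'$ is a clump in $G'$), a direct count gives $|\widetilde L\cap\lambda(V(C))|=k(|V(C)|-1)$ for each such $C$. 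Summing over components yields the rank equation $|\widetilde L|=k\cdot\mathrm{rk}_G(\widetilde L)$.

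The main obstacle is the other clump condition: that $\widetilde L$ can be partitioned into $k$ forests in $G$. Starting from partitions $L=L_1\sqcup\ldots\sqcup L_k$ in $G$ and $L'=L_1'\sqcup\ldots\sqcup L_k'$ in $G'$ into forests (guaranteed by each being a clump), the natural candidate is $L_i\sqcup L_i'$ in $G$, and I would prove these are forests by contradiction. A simple cycle $Z\subseteq L_i\cup L_i'$ that uses no edge of $L_i'$ must lie inside a single $\lambda(U_j)$, by vertex-disjointness of the $U_j$'s and connectedness of $Z$, which violates that $L_i$ is a forest. A cycle $Z$ that does use some $L_i'$-edge, upon contracting all $U_j$'s, becomes a nontrivial closed walk in $G'$ whose edges come from $L_i'$ and are used at most once; the support of this walk is a nonempty subgraph of $L_i'$ in which every vertex has even degree and therefore contains a cycle, contradicting that $L_i'$ is a forest. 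This careful handling of the $G\to G'$ contraction is where I expect the real work to sit; once it is done, combining with the rank equation from the previous paragraph shows $\widetilde L$ is a clump in $G$, completing the contradiction.
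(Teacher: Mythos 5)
Your proof is correct and follows essentially the same approach as the paper: assume a nonempty clump $L'$ of $F'$ exists, form $L\sqcup L'$, and show it is a clump of $F$ in $G$, contradicting the maximality of $L$. The paper's version is terse (it merely asserts that after uncontracting, $L\sqcup H$ "can still be decomposed into $k$ spanning forests"); you supply the missing details, most usefully the even-degree/closed-walk argument showing that a cycle in $L_i\cup L_i'$ projecting nontrivially to $G'$ would force a cycle in $L_i'$, and the explicit rank count $|\widetilde L\cap\lambda(V(C))|=k(|V(C)|-1)$.
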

\begin{proof}
    Assume the contrary. Then, there is $H \subseteq F'$ that consists of $k$ spanning trees $T_1, \ldots, T_k$. Uncontract the connected components of $L$ one-by-one. $L \sqcup H$ can still be decomposed into $k$ spanning forests, therefore, it is a clump. We have a contradiction with the maximality of $L$.
\end{proof}


\subsection{Main algorithm for Problem \ref{problem_k_forest}}\label{sec:main}

In this section, we describe a solution to the $k$-{\sc forest} problem 
that uses the algorithm for Problem~\ref{the_one_problem}, {\tt Pseudoforests} and top clump computations as subroutines
(see Algorithm~\ref{proc_k_forests}). Note that the orientation of $F$ at line 3
exists and can be computed efficiently: in each $F_i$
we can orient edges away from roots chosen arbitrarily in each component
(then $\indeg_{F_i}(v)\le 1$ for each $v$).

\begin{algorithm}[!h]
    \DontPrintSemicolon
    \SetNoFillComment

    $F_1, \ldots, F_k \gets \varnothing, \ldots, \varnothing$ \\
    \While{$|F|$ changes} {
        orient edges in $F$ so that $\indeg_F(v)\le k$ for each $v\in V$ 
        \\ 
        $P \gets$ {\tt Pseudoforests}$(G, F, k)$ \\
        solve Problem~\ref{the_one_problem} for input $((V,P),k,{\bf 0})$, let $F=F_1\sqcup\ldots F_k$ be an optimal solution \\
        find the top clump of $F$ \\
        contract each component of the top clump into a supervertex, update $G,F$ accordingly
    }
    recursively uncontract the contracted components together with forests $F_1,\ldots,F_k$ (see Sec.~\ref{sec:clump})\!\!\!\!\!\!\!\!\!\!\!\! \\
    \Return $F_1, \ldots, F_k$
    
    \caption{{\tt Forests}($G$: undirected graph, $k$: positive integer).}
    \label{proc_k_forests}
\end{algorithm}

For a graph $G$, let ${\tt OPT}(G)$ 
be the size of an optimal solution to the $k$-{\sc forest} problem for $G$.
The following lemma analyzes 
a single {\bf while} loop  (lines 3-7).
\begin{lemma}\label{lemma:main-analysis}
Let $(G,F)$ be the graph and the solution at the beginning of a loop,
$H=H_1\sqcup\ldots\sqcup H_k$ be the solution computed at line 5, 
and $(G',F')$ be the graph and the solution after contraction. Then
\begin{equation}\label{eq:diff-change}
{\tt OPT}(G')-|F'|
\;\;=\;\;{\tt OPT}(G)-|H|
\;\;\le\;\; \tfrac{k}{k+1} \cdot ({\tt OPT}(G)-|F|)
\end{equation}
\end{lemma}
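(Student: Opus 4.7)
The plan is to treat the stated equality and the stated inequality separately.

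For the equality ${\tt OPT}(G')-|F'|={\tt OPT}(G)-|H|$, I would iterate Lemma~\ref{lemma:contract} over the connected components $U_1,\ldots,U_r$ of the top clump $L$ of $H$. Each $U_j$ satisfies the precondition of Lemma~\ref{lemma:contract} because, by the clump definition, each $H_i$ restricted to $\lambda(U_j)$ is a spanning tree. Each contraction removes exactly $k(|U_j|-1)$ edges from $H$ (they become self-loops and are deleted), so after all contractions $|F'|=|H|-|L|$; by Lemma~\ref{lemma:contract}(b) each contraction also reduces ${\tt OPT}$ by $k(|U_j|-1)$, giving ${\tt OPT}(G')={\tt OPT}(G)-|L|$. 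Subtracting the two yields the equality.

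For the inequality, I would first rearrange it to the equivalent form ${\tt OPT}(G)-|H|\le k(|H|-|F|)$ and then proceed in two stages. First, I would apply Lemma~\ref{lemma:we} to $H$ in the directed graph $P$ to obtain pairwise-disjoint sets $\calS=\{S_1,\ldots,S_q\}$; let $R=V\setminus\bigcup_j S_j$. Properties ${\tt P}_2$, ${\tt P}_3$ and the indegree-at-most-$k$ bound together imply $\indeg_H(v)=k$ for $v\in R$ and $\indeg_H(S_j)=k(|S_j|-1)+|\rho_P(S_j)|$, so summing $\sum_v\indeg_H(v)$ gives the explicit formula $|H|=kn-kq+\sum_j|\rho_P(S_j)|$. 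Second, I would apply the matroid union theorem for the graphic matroid on $G$ with $A=\bigcup_j\lambda_G(S_j)$: by ${\tt P}_1$ each $S_j$ is a connected component of $(V,A)$ and the vertices of $R$ remain isolated, so $\text{rk}_\calG(A)=\sum_j|S_j|-q$. A short argument shows that every $P$-edge outside any $S_j$ lies in $H$ (use ${\tt P}_2$ when the head of the edge is in some $S_j$, and ${\tt P}_3$ when the head lies in $R$), so the $P$-part of $|E(G)\setminus A|$ equals $k|R|+\sum_j|\rho_P(S_j)|$. Writing $N:=\{e\in E(G)\setminus E(P):\text{endpoints of $e$ are not both in any single }S_j\}$, the matroid union bound simplifies to ${\tt OPT}(G)\le |H|+|N|$.

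The remaining step, and the main obstacle, is to show $|N|\le k(|H|-|F|)$. For each $e\in N$, the maximality of $P$ (as output by {\tt Pseudoforests}) produces a tight set $T_e\subseteq V$ with $|\lambda_P(T_e)|=k|T_e|$ containing both endpoints of $e$; since $e$ lies outside every $S_j$, the set $T_e$ is not contained in any single $S_j$. Because $F\subseteq P$ is a $k$-forest, $|\lambda_P(T_e)\setminus F|\ge k|T_e|-k(|T_e|-1)=k$. The hard part will be converting this pointwise slack into the sharp global bound: the tight sets $T_e$ for distinct $e\in N$ can overlap, and a direct system-of-distinct-representatives charging appears to give only the weaker $|N|\le k|H\setminus F|$, which fails to imply $|N|\le k(|H|-|F|)$ once $|F\setminus H|>0$. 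I expect the argument to combine the lattice structure of tight sets in $P$ (via submodularity of $S\mapsto |\lambda_P(S)|-k|S|$) with an augmenting-path / matroid-exchange analysis between $F$ and $H$ in $P$, charging each $e\in N$ to $k$ distinct net-improvement edges that account for edges of $F\setminus H$ as well as those of $H\setminus F$.
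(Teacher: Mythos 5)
Your treatment of the first equality via iterated application of Lemma~\ref{lemma:contract} is correct, and so is the algebraic reduction of the inequality to ${\tt OPT}(G)-|H|\le k(|H|-|F|)$. Applying Lemma~\ref{lemma:we} to $H$ in $P$ is also the right move, and your degree-counting formula $|H|=kn-kq+\sum_j|\rho_P(S_j)|$ and the derivation of ${\tt OPT}(G)\le|H|+|N|$ from the matroid union bound both check out. But from there the proposal goes off the rails, and the gap you flag yourself is real and not just a matter of effort.

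The decisive missing observation is that ${\tt OPT}(G)\le|P|$: any optimal $k$-forest can be oriented with indegrees at most~$k$, hence is feasible for the pseudoforest problem, hence has at most $|P|$ edges. Once you have this, you never need to reason about edges of $G$ outside $P$ at all; in particular the set $N$ and the matroid-union detour become irrelevant. Instead, the quantity to bound is $\overline H=P-H$, and the paper does this with three short per-$S$ facts: (i) $\overline H\subseteq\bigcup_{S\in\calS}\lambda(S)$, by ${\tt P}_2$ and ${\tt P}_3$ applied to the head of an edge in $\overline H$, which has $\indeg_H<\indeg_P\le k$; (ii) each $\lambda(S)$ carries at most $k$ edges of $\overline H$, since $\indeg_P(S)\le k|S|$ while $\indeg_H(S)\ge k|S|-k$ by ${\tt P}_1$; and (iii) each $\lambda(S)$ carries at least one edge of $H-F$, since otherwise $\lambda(S)$ would be a clump of $F$, contradicting Lemma~\ref{lemma:clump-free}. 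Summing over $\calS$ gives $|\overline H|\le k|H-F|$, and (after the routine w.l.o.g.\ reduction $F\subseteq H\subseteq P$, which you also omitted and which is needed to make $|H|-|F|=|H-F|\ge0$) this combines with ${\tt OPT}(G)\le|P|$ to give the claim directly.

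By contrast, the inequality $|N|\le k(|H|-|F|)$ that your route requires is not provable in general, because $|N|$ depends heavily on which family $\calS$ Lemma~\ref{lemma:we} happens to return, and for an unfavorable (but perfectly valid) $\calS$ it is simply false. Concretely, take $k=1$, $V=\{a,b,c,d\}$, ten parallel $ab$ edges, ten parallel $cd$ edges, and one $bc$ edge, with $F=\varnothing$. One legitimate run produces $P=\{ab_1,bc,cd_1,cd_2\}$, $H=\{ab_1,bc,cd_1\}$, and $\calS=\{\{c,d\}\}$ (which one can check satisfies ${\tt P}_1$--${\tt P}_3$). Then the nine extra $ab$-copies are outside $P$ and outside every $S_j$, so $|N|=9$, while $k(|H|-|F|)=3$. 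The lemma statement is still fine here, because ${\tt OPT}(G)=3=|H|$, but your intermediate inequality $|N|\le k(|H|-|F|)$ fails, so the chain ${\tt OPT}(G)\le|H|+|N|\le|H|+k(|H|-|F|)$ is broken. The tight-set charging you sketch cannot repair this, since the overshoot is not an artifact of double-counting but of $N$ genuinely exceeding the target; only the intermediate bound through $|P|$, or a matroid-union cut of quality at least as good as $|P|$, avoids the problem.
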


\begin{corollary}\label{th:main-analysis}
    Algorithm {\tt Forests} runs in $O(k^3 \min\{kn,m\} \log^2 n + k \cdot{\rm MAXFLOW}(m, m) \log n)$ time and solves Problem \ref{problem_k_forest} correctly.
\end{corollary}
\begin{proof}
By Lemma~\ref{lemma:main-analysis}, there will be at most
    $
        O(\log_{\tfrac{k+1}{k}} (k n)) = O(k \log (k n)) = O(k \log n)
    $ iterations of the {\bf while} loop, and we will have
 ${\tt OPT}(G)-|F|=0$ upon termination.
    The pseudoforest computation in line 4 runs in $O({\rm MAXFLOW}(m, m))$, as discussed in subsection \ref{subsection_pseudoforests}. The size of $P$ is bounded by 
    $\min\{k n,m\}$, so solving Problem~\ref{the_one_problem} at line 5 takes in $O(k^2 \min\{k n,m\} \log n)$ time. This dominates the computation of the top clump at line 6. The final complexity is then $O(k \log n \cdot({\rm MAXFLOW}(m, m) + k^2\min\{k n,m\} \log n))$, as claimed.
\end{proof}

\begin{remark}
    One may use sparsification techniques to reduce the complexity. In \cite{quanrud}, it is stated (with a proof sketch) that if we greedily pack $\lceil k(1 + \log_2 n ) \rceil$ maximal forests in a graph, and delete the remaining edges, it will not affect the size of the maximum $k$ forests (Lemma 8.5 in \cite{quanrud}, specialized to graphic matroids). Such a greedy packing can be found in linear time using the Nagamochi-Ibaraki technique \cite{NagamochiIbaraki_sparsification}. Thus, ``$m$'' in the runtime can be substituted with $\min\{k n \log n, m\}$ with an $O(m)$ overhead.    
\end{remark}


\section{Algorithm for Problem~\ref{problem_augmentation}}\label{sec:directed}

Recall that by Theorem~\ref{theorem_frank}, the  minimum number of directed edges that should be added to $G$ to make it strongly $k$-connected
equals
$\gamma(G, k) = \max \left\{ \alpha_{\tt in}(G, k), \alpha_{\tt out}(G, k) \right\}$ where

    \begin{eqnarray}
	\alpha_{\tt in}(G, k)&=&\max_{\raisebox{-5pt}{\mbox{\small $\calX$: a proper subpartition of $V$}}} \;\; \sum_{A\in\calX}(k-|\rho(A)|)     \end{eqnarray}
and	
$\alpha_{\tt out}(G, k)=\alpha_{\tt in}(G^{\tt rev}, k)$.
We will need a few definitions.
\begin{definition}\label{def:ks-connected}
    A graph $G' = (V + s, E')$ is \textit{$(k,s)$-connected} if, for every pair of vertices $u, v \in V$, the size of a minimum $u$-$v$ cut is at least $k$.
    It is a \textit{$(k,s)$-connected extension of $G=(V,E)$} if in addition $E\subseteq E'$ and all edges in $E'-E$ are incident to $s$.
\end{definition}

\begin{theorem}[Frank, {\cite[Lemma 3.3]{frank}}]
    For a directed graph $G$, there exists a $(k, s)$-connected extension $G' = (V + s, E')$ with $\indeg(s) = \outdeg(s) = \gamma(G, k)$.
    \label{th_greedy_extension}
\end{theorem}

The next theorem was originally proved by Mader in paper \cite{mader_splitting} written in German. For an English source, one can see \cite[Theorem 3.4]{frank}.
\begin{theorem}[Mader, \cite{mader_splitting}]
    Suppose that graph $G' = (V + s, E')$ is $(k, s)$-connected, and $\indeg(s) = \outdeg(s)$.
    Then for every edge $(u, s)\in E'$ there exists $(s, v)\in E'$ such that graph $G''=(V+s,E'-\{(u,s),(s,v)\}+\{u,v\})$ is also $(k,s)$-connected.
\end{theorem}
The operation of replacing edges $(u,s),(s,v)$ with $(u,v)$
that preserves $(k,s)$-connectivity is called {\em edge splitting}. The theorems above imply
that Problem \ref{problem_augmentation} can be solved as follows:
\begin{enumerate}
    \item Generate a $(k, s)$-connected extension $G' = (V + s, E')$ with $\indeg(s) = \outdeg(s) = \gamma(G, k)$.
    (Such extension will be called {\em optimal}).

    \item Perform edge splittings while $s$ is not isolated.
\end{enumerate}

In Section~\ref{sec:optimal-extension} we will show that the first subproblem, computing $G'$, can be solved in $O(k\delta m\log n)$ time.
The second subproblem, edge splitting, can be done in $O(k m' \log n)$ time \cite{edge_splitting}
where $m'\le m+2\delta n$ is the number of edges in $G'$. This yields an algorithm
for Problem~\ref{problem_augmentation} with overall complexity $O(k\delta m\log n + km\log n + k\delta n \log n)=O(k\delta m\log n)$.

\subsection{Computing optimal $(k,s)$-connected extension $G'$ of $G$}\label{sec:optimal-extension}


We say that a vector $\eta:V\rightarrow \mathbb Z_{\ge 0}$ is a {\em $k$-half-extension of $G$}
if 
\begin{equation}\label{eq:half-extension}
|\rho(A)|+\eta(A)\ge k\qquad\quad \forall A\subsetneq V, A\ne \varnothing
\end{equation}
It is {\em minimal} if there is no $k$-half-extension $\eta'\ne\eta$ 
with $\eta'\le \eta$. 
Equivalently, $\eta$ is a minimal $k$-half-extension if and only if it can be
obtained by the greedy algorithm that initializes $\eta=(k,\ldots,k)$ and then keeps doing the following step while possible:
\begin{itemize}
\item {\em pick some $v\in V$ with $\eta(v)>0$, decrease $\eta(v)$ by 1 if this preserves~\eqref{eq:half-extension}.}
\end{itemize}

Frank showed that this greedy procedure can be used to construct an optimal $(k,s)$-connected extension
as follows.

\begin{theorem}[Frank, reformulation of {\cite[Theorem 3.9]{frank}}]
Let $\eta$ and $\eta^{\tt rev}$ be minimal $k$-half-extensions of $G$ and $G^{\tt rev}$, respectively.
Then $(k,s)$-connected extension $G'$ of $G$ with ${\tt indeg}(s)={\tt outdeg}(s)=\gamma(G,k)$
can be obtained from $G$ as follows: \\
(i) add new node $s$;  \\
(ii) for each $v\in A$ with $\eta(v)>0$ add $\eta(v)$ edges $(s,v)$; \\
(iii) for each $v\in A$ with $\eta^{\tt rev}(v)>0$ add $\eta^{\tt rev}(v)$ edges $(v,s)$; \\
(iv) if $\eta(V)>\eta^{\tt rev}(V)$ then add $\eta(V)-\eta^{\tt rev}(V)$ arbitrary edges of the form $(v,s)$,
otherwise add $\eta^{\tt rev}(V)-\eta(V)$ arbitrary edges of the form $(s,v)$.
\end{theorem}
Thus, it suffices to show how to compute a minimal $k$-half-extension of a given graph $G$.
The next two theorems give two approaches for solving this problem.
Note that the first approach is applicable only in a special case.
\begin{theorem}\label{th:approachA}
Let $F$ be an optimal solution of Problem~\ref{the_one_problem} for $(G,k,{\bf 0})$.
Then $\alpha_{\tt in}(G,k)\le \Def_F(V)$.
Now suppose that $\Def_F(V)>k$ (equivalently, not all forests in $F$ are spanning trees).
Then the vector $\eta$ with $\eta(v)=\Def_F(v)$ for $v\in V$ is a minimal $k$-half-extension of~$G$,
and furthermore $\alpha_{\tt in}(G,k)=\Def_F(V)>k$.
\end{theorem}
\begin{theorem}\label{th:approachB}
Fix node $a\in V$, and let $F$ be an optimal solution of Problem~\ref{the_one_problem} for $(G,k,\tau^{a:k})$.
Define the vector $\eta$ as follows:
 $\eta(v)=\Def_F(v)$ for $v\in V-a$, and  $\eta(a)$ is the minimum value for which
    \begin{equation}\label{eq:eta-a:choice}
    	 \min \{|\rho(A)|+\eta(A) \::\: \forall A \subsetneq V,a\in A\} \;\;\ge\;\; k
    \end{equation}
Then  $\eta$ is a minimal $k$-half-extension of~$G$.
\end{theorem}
\begin{proof}[Proof of Theorems~\ref{th:approachA} and~\ref{th:approachB}]
The claim $\alpha_{\tt in}(G,k)\le \Def_F(V)$ in Theorem~\ref{th:approachA}
follows directly from Corollary~\ref{cor:minmax:kforests}.
From now on, we assume that $\Def_F(V)>k$ in this theorem.

Let us define $a=\bot$ and $\tau={\bf 0}$ in the case of Theorem~\ref{th:approachA}, 
and $\tau=\tau^{a:k}$ in the case of Theorem~\ref{th:approachB}. Thus, in both cases we have $a\in V\cup\{\bot\}$,
and $F$ is an optimal solution of Problem~\ref{the_one_problem} for $(G,k,\tau)$.

First, we show that the vector $\eta$ is a $k$-half-extension, i.e.\ it satisfies~\eqref{eq:half-extension}. Consider set $A\subsetneq V$ with $A\ne \varnothing$.
Suppose that $a\notin A$.
We have $\indeg_F(A)\le |\rho(A)| + k (|A| - 1)$ and hence
\begin{equation*}
|\rho(A)|+\eta(A) = |\rho(A)| + \Def_F(A) = |\rho(A)| + k|A|-\indeg_F(A)
\ge k
\end{equation*}
If $a\in A$ then we are in the case of Theorem~\ref{th:approachB} and the claim holds since $\eta$ satisfies~\eqref{eq:eta-a:choice}.

Now let $\calS$ be the subpartition constructed in Lemma~\ref{lem_q} for optimal solution $F$ of input $(G,k,\tau)$.
We claim that every $A\in\calS$ satisfies the following:
\begin{itemize}
\item $a\notin A$
(otherwise we are in the case of Theorem~\ref{th:approachB}, and $0<\Def_F(A)=k-\tau(A)-|\rho(A)|\le k-k-|\rho(A)|\le 0$ by eq.~\eqref{eq:Fclosed}
and the properties of $\calS$ - a contradiction).
\item $A\ne V$. If $a\ne \bot$ then this holds by the previous claim,
and if $a=\bot$ then this holds since $F_i$ is not a spanning tree for some $i$,
and $A$ is contained in a connected component of $F_i$.
\end{itemize}

Consider node $v\in V-a$ with $\eta(v)=\Def_F(v)>0$,
and let $A$ be the set in $\calS$ containing $v$.
Eq.~\eqref{eq:Fclosed} gives $\eta(A)=\Def_F(A)=k-|\rho(A)|$
and hence $|\rho(A)|+\eta(A)=k$.
Thus, decreasing $\eta(v)$ will make~\eqref{eq:half-extension} false.
Decreasing $\eta(a)$ (in the case of Theorem~\ref{th:approachB}) will also make~\eqref{eq:half-extension} false
by the choice of $\eta(a)$. This shows that $\eta$ is a minimal $k$-half-extension.

It remains to observe that in Theorem~\ref{th:approachA} we have $\eta(V)=\Def_F(V)=\alpha_{\tt in}(G,k)$
by Corollary~\ref{cor:minmax:kforests} and the fact that subpartition $\calS$ is proper.
\end{proof}

Next, we discuss how to implement the approach in Theorem~\ref{th:approachB}. 
We will use Algorithm~\ref{alg:CompleteEta} to compute the minimum value $\eta(a)$
for which~\eqref{eq:eta-a:choice} holds. The structure of Algorithm \ref{alg:CompleteEta} is inspired by Gabow's paper \cite{Gabow:STOC94}.

\begin{algorithm}[!h]
    \DontPrintSemicolon
    \SetNoFillComment
 
    let $G'$ be the graph obtained from $G$ by adding nodes $s,t$, edges $(s,v)$ with capacity $\eta(v)$ for $v\in V-a$,
    and edge $(a,t)$ with capacity $k$ \\
    compute maximum $s$-$t$ flow in $G'$; let $\bar G'$ be the residual graph and $f\in[0,k]$ be the value of the flow \\
    \If{edge $(a,t)$ is saturated (i.e.\ $f=k$) or some edge $(s,v)$ is not saturated}
    {
    	set $\eta(a)=k-f$
    }
    \Else
    {
	    let $\bar G$ be the graph obtained from $\bar G'$ by removing $s,t$ and incident edges \\
	    compute maximum $\ell \in [0,k-f]$ s.t.\ graph $\bar G^{\tt rev}$ has a complete $\ell$-intersection for node $a$\!\!\!\!\!\!\!\!\! \\
	    set $\eta(a)=k-f-\ell$
    }
    \caption{${\tt CompleteEta}(G,k,\eta)$ for partial vector $\eta:V-a\rightarrow\{0,1,\ldots,k\}$}
    \label{alg:CompleteEta}
\end{algorithm}

\begin{theorem}
Algorithm~\ref{alg:CompleteEta} sets $\eta(a)$ to the minimum value for which~\eqref{eq:eta-a:choice} holds.
It can be implemented in $O(km\log n)$ time.
\end{theorem}
\begin{proof}
Below, an {\em $s$-$t$ cut} is a set of nodes $U$ containing $t$ but not $s$, and its cost in a given
graph is the total cost of edges entering $U$.

Let $A^\ast$ be a set that minimizes the expression in~\eqref{eq:eta-a:choice},
and let $\theta = |\rho(A^\ast)|+\eta(A^\ast-a)$.
Clearly, $\eta(a)$ should be set to value $\eta^\ast(a)\eqdef\max\{k-\theta,0\}$.
Set $A^\ast+t$ is an $s$-$t$ cut in $G'$ of cost $\theta$, therefore $f\le \theta$.
In particular, if $f=k$ then $\theta\ge k$ and $\eta^\ast(a)=0$.
Let us assume that $f<k$, then edge $(a,t)$ is not saturated in a maximum flow
and hence we can assume w.l.o.g.\ that we chose $t=a$ when constructing $G'$,
and did not add edge $(a,t)$. Also, $\bar G$ is obtained from $\bar G'$ by removing $s$ and incident edges.
Note that $A^\ast$ is a minimum $s$-$a$ cut in $G'$ among cuts satisfying $A^\ast\ne V$.

Suppose that some edge $(s,v)$ is not saturated.
Then any minimum $s$-$a$ cut in $G'$ does not equal $V$ (since it does not contain $v$),
thus $A^\ast$ must be a minimum $s$-$a$ cut in $G'$. 
$f$ equals the cost of $A^\ast$ in $G'$ which is $\theta$,
therefore $\eta^\ast(a)=\max\{k-f,0\}=k-f$, as desired.

Finally, let us assume that all edges $(s,v)$ are saturated.
In that case we have $|\rho(A)|+\eta(A-a)=f+|\bar\rho(A)|$ for any $A\subseteq V$ containing $a$, where $\bar\rho(A)$ are the edges of $\bar G$ entering $A$. Therefore,
$$
\theta=f+\ell^\ast, \qquad\quad 
\ell^\ast=\min\{|\bar\rho(A)|\::\:A\subsetneq V,a\in A\}
    =\min\{|\bar\rho^{\tt rev}(\overline A)|\::\:\varnothing\ne\overline A\subseteq V-a\}
$$
By the Edmonds' theorem about complete $k$-intersections, line 7 of Algorithm~\ref{alg:CompleteEta}
outputs $\ell=\min\{\ell^\ast,k-f\}=\min\{\theta-f,k-f\}$, and thus
$$
\eta(a)=k-f-\min\{\theta-f,k-f\}=k-\min\{\theta,k\}=\max\{k-\theta,0\}=\eta^\ast(a)
$$

It remains to discuss the complexity of Algorithm~\ref{alg:CompleteEta}.
Maximum flow in line 2 can be computed in time $O(km)$ since there are $f\le k$ augmentations
and each augmentation can be found in $O(m)$ time. Value $\ell$ in line 7 can be computed in $O(km\log n)$
time by the algorithm of Gabow~\cite{gabow}. This concludes the proof.
\end{proof}

\section{Algorithm for Problem~\ref{undirected_problem_augmentation}}\label{sec:undirected}


The approach in the previous section can be easily adapted to solve
Problem~\ref{undirected_problem_augmentation}:
given an undirected graph $G$, find a smallest set of undirected edges whose addition to $G$ makes $G$  $k$-connected.
Let $\gamma(G,k)$ be the number of new edges. Below, we always assume that $k\ge 2$
(for $k=1$ the algorithm is trivial).

For $A\subseteq V$ let $\partial A$ be the set of edges between $A$ and $V-A$ in $G$.
We say that a vector $\eta:V\rightarrow \mathbb Z_{\ge 0}$ is a {\em $k$-extension of $G$}
if 
\begin{equation}\label{eq:k-extension}
|\partial A|+\eta(A)\ge k\qquad\quad \forall A\subsetneq V, A\ne \varnothing
\end{equation}
It is {\em minimal} if there is no $k$-extension $\eta'\ne\eta$ 
with $\eta'\le \eta$. 
Clearly, $\eta$ is a minimal $k$-extension of $G$
if and only if it is a minimal $k$-half-extension of $\vec G$
where $\vec G$ is a directed graph obtained from $G$ by replacing each edge $uv$
with two directed edges $(u,v),(v,u)$.
Thus, $\eta$ can be computed in $O(k\delta m\log n)$ time by the approach in the previous section.

To solve Problem~\ref{undirected_problem_augmentation}, we can now use the following algorithm.
\begin{enumerate}
\item Compute a minimal $k$-extension $\eta$ of $G$. If $\eta(V)$ is odd,
then pick arbitrary $v\in V$ and increase $\eta(v)$ by 1.
Frank's proof of \cite[Lemma 4.2]{frank} shows that we now have $\eta(V)=2\gamma(G,k)$.
Note that $\eta(V)\le \delta n + 1$.
\item Construct an undirected graph $G'$ with $m'= m+\eta(V)$ edges by adding node $s$ to $G$ and $\eta(v)$ edges $vs$ for each $v\in V$.
Clearly, $G'$ is $(k,s)$-connected, where we $(k,s)$-connectivity is defined by applying Definition~\ref{def:ks-connected} to undirected graphs.
\item Repeat the following step while $s$ has incident edges in $G'=(V+s,E')$:
find a pair of edges $us,vs$ in $G'$ for which graph $G''=(V+s,E'-\{us,vs\}+\{uv\})$ is $(k,s)$-connected, replace $G'$ with $G''$.
The existence of such a pair was shown by Lov{\'a}sz~\cite{Lovasz:74,Lovasz:79}, see also \cite[Theorem 4.5]{frank}.
An efficient algorithm with complexity $O((k^2n+m')\log n)$ for performing all edge
splittings has been given in~\cite{edge_splitting}.
\end{enumerate}
The overall complexity of this algorithm is $O(k(\delta m + k n)\log n)$.
%
Notice that $\delta m = \Omega(\delta k_G n) = \Omega(\delta (k - \delta) n)$, which is $\Omega(k n)$ for $\delta$ between 1 and $k - 1$. If $\delta = k$, then again $\delta m = \Omega(k n)$, since $m = \Omega(n)$.
We conclude that Problem~\ref{undirected_problem_augmentation}
can be solved in $O(k\delta m\log n+(k^2 n + m + \delta n)\log n)= O(k(\delta m + k n)\log n) = O(k \delta m \log n)$ time.


We remark that in the undirected case, the min-max characterization for Problem~\ref{undirected_problem_augmentation} has been established by Cai and Sun~\cite{CaiSun:89};
a simplified proof can be found in~\cite{frank}.

\begin{theorem}[\cite{CaiSun:89}]
    If $k \geq 2$ then
    \begin{equation}
    \gamma(G,k)\;\;=\;\;\left\lceil \frac{1}{2} \;\;\max_{\raisebox{-5pt}{\mbox{\small $\calX$: a proper subpartition of $V$}}} \;\; \sum_{A\in\calX}(k-|\partial A|) \right\rceil
    \end{equation}
\end{theorem}


\appendix
\section{Proofs from section \ref{sec:directed_problem}}\label{appendix_directed_problem}

\subsection{Proof of Theorem~\ref{th:proc_augment}} \label{sec:proc_augment:proof}
Consider part (a).
The first claim can be verified by comparing how many times $i$ occurs in $(\ell(e_1),\ldots,\ell(e_r))$ before and after the update.
Let us show the second claim. By the definition of the auxiliary graph, if $\ell_{i+1}=\varnothing$ for some $i\in[r-1]$ (i.e.\ if $e_{i+1}\notin F$)
then $\head(e_{i})=\head(e_{i+1})$.
Suppose that $\ell_{i}=\ell_{i+1}=\varnothing$ for some $i\in[r-1]$.
Let us remove $e_i$ from $P$.
The new path $P'$ is valid $V$-$t$ path
(if $i>1$, then this holds since $\head(e_{i-1})=\head(e_{i})=\head(e_{i+1})$,
and hence $(e_{i-1},e_{i+1})$ is in the auxiliary graph). Furthermore, {\tt Augment}($P$)
and {\tt Augment}($P'$) yield the same result. By repeatedly applying such operation,
we can assume w.l.o.g.\ that $(\ell_1,\ell_2,\ldots,\ell_r)$
does not contain two consecutive $\varnothing$'s.
Augmenting $P$ changes the set of edges in $F$ as follows:
$e_1$ is added to $F$, and for every $i\in[2,r-1]$ with $\ell_{i+1}=\varnothing$,
$e_i$ is removed from $F$ and $e_{i+1}$ is added to $F$. The claim can now be easily verified.

Now let us prove parts (b,c). Unless noted otherwise, in this section $F=F_1\cup \ldots \cup F_k\in\calG^k\cap\calD$ always denotes a feasible solution of Problem~\ref{the_one_problem},
$e,f,x,y$ denote elements of $E$, and $s,t$ denote special elements that are not in $E$.

Below, we will define several directed graphs. For all of them, the set of nodes will be $E\sqcup\{s,t\}$,
so we will view these graphs as subsets of edges. For such a graph $D$, we define $D^{\tt inv}$
to be the graph obtained from $D$ by reversing edge orientations and additionally swapping nodes $s$ and $t$.
For example, $(s,e)\in D$ if and only if $(e,t)\in D^{\tt inv}$.

For a matroid $\calM$ on $E$ and independent set $X\in\calG$ define a graph $D_\calM(X)$ via
\begin{eqnarray*}
D_\calM(X) &=&\{(x,y)\::x\notin X,y\in X,X+x-y\in\calM\} 
\;\cup\;
\{(x,t)\::x\notin X,X+x\in \calM\}  
\end{eqnarray*}
Using this notation, we now define the following graphs:
\begin{eqnarray*}
D^{\tt union}(F)&=&D_\calG(F_1)\cup\ldots\cup D_\calG(F_k) \\
D^{\tt intersection}(F)&=& D_{\calG^k}(F) \cup [D_{\calD}(F)]^{\tt inv}
\end{eqnarray*}
It can be checked that
\begin{eqnarray*}
D(F)\;\;=\;\; D^{\tt union}(F) &\cup& \{(v,e)\::\:\index_F(v)<k,\head(e)=v,e\notin F\}  \\
&\cup & \{(e,f)\::\:e\in F,f\notin F,\head(e)=\head(f)\}
\end{eqnarray*}

The following are classical results about matroid union and intersection problems, see e.g.~\cite{schrijver-book:B}.
\begin{theorem}[{\cite[Theorem 42.4]{schrijver-book:B}}] \label{th:matroid-union}
For any $F\in\calG^k$ and $f\notin F$ the following holds: $F+f\in\calG^k$ if and only if $D^{\tt union}(F)$ has an $f$-$t$ path.
\end{theorem}

\begin{theorem}[{\cite[Theorems 41.2-41.3]{schrijver-book:B}}] \label{th:matroid-intersection}
 $F\in\calD\cap\calG^k$ is not optimal if and only if $D^{\tt intersection}(F)$ has an $s$-$t$ path.
\end{theorem}


From these facts, we can infer Theorem~\ref{th:proc_augment}(c) as follows.
\begin{lemma}
If $F\in\calD\cap\calG^k$ is not optimal, then $D(F)$ has a $V$-$t$ path.
\end{lemma}
\begin{proof}
By Theorem~\ref{th:matroid-intersection}, $D^{\tt intersection}(F)$ has an $s$-$t$ path $P$.
Consider a shortest such path.
It may contain edges of the following types.
\begin{itemize}
\item $(s,e)$ with $e\notin F$ and $F+e\in \calD$. Let $v=\head(e)$, then $\indeg_F(v)<k$, and hence $(v,e)\in D(F)$.
\item $(e,f)$ with $f\notin F$, $e\in F$ and $F-e+f\in\calD$.
Since $P$ is shortest, we have $(s,f)\notin D^{\tt intersection}(F)$ and thus $F+f\notin \calD$.
This implies that $\head(e)=\head(f)$, and hence $(e,f)\in D(F)$.
\item $(x,y)$ with $x\notin F$, $y\in F$ and $F-y+x\in\calG^k$. 
We claim that $D(F)$ contains either an $x$-$y$ path or an $x$-$t$ path.
Indeed, 
graph $D^{\tt union}(F-y)$ contains an $x$-$t$ path $P'$  (by Theorem~\ref{th:matroid-union}).
The claim will now follow from the two facts below.
\begin{itemize}
\item[$-$] If $(f,e)\in P'$, $y\ne f$ then either $(f,e)\in D(F)$ or $(f,y)\in D(F)$.
Indeed, we have $f\notin F_i-y$, $e\in F_i-y$ 
and $(F_i-y)-e+f\in \calG$ for some~$i\in[k]$.
Assume that $y\in F_i$ (otherwise $F_i-e+f\in \calG$ and hence $(f,e)\in D(F)$).
Applying the matroid exchange axiom to sets $F_i-y-e+f$ and $F_i$
yields that either $F_i-e+f\in \calG$ (in which case $(f,e)\in D(F)$)
or $F_i-y+f\in \calG$ (in which case $(f,y)\in D(F)$).
\item[$-$] If $(e,t)\in P'$, then either $(e,y)\in D(F)$ or $(e,t)\in D(F)$. Indeed, we have $e\notin F_i-y$, $(F_i-y)+e\in\calG$ for some $i\in [k]$.
If $y\in F_i$ then $(e,y)\in D(F)$, otherwise $F_i+e\in\calG$ and hence $(e,t)\in D(F)$.
\end{itemize}
\item $(e,t)$ with $e\notin F$ and $F+e\in \calG^k$.
Then the graph $D^{\tt union}(F)\subseteq D(F)$ contains an $e$-$t$ path (by Theorem~\ref{th:matroid-union}).
\end{itemize}
From these facts, we can conclude that $D(F)$ contains a $V$-$t$ path.

\end{proof}

It remains to prove Theorem~\ref{th:proc_augment}(b).
For that we introduce the following definition:
\begin{definition}
Given independent set $X\in\calG$, an {\em $X$-path} is a sequence of distinct elements $Q=(x_1,y_1,\ldots,x_r,y_r)$
where $r\ge 1$, $y_r\in E\cup\{t\}$, all other elements of $Q$ belong to $E$, and the following holds:
\begin{itemize}
\item[(1)] $(x_i,y_i)\in D_\calG(X)$ for $i\in [r]$.
\item[(2)] $(x_i,t)\notin D_\calG(X)$ for all $i\in[r]$ with $y_i\ne t$.
\item[(3)] $(x_i,y_j)\notin D_\calG(X)$ for all $i,j\in[r]$ with $i<j$.
\end{itemize}
For such $Q$ we define $X\oplus Q=X+\{x_1,\ldots,x_r\}-\{y_1,\ldots,y_r\}\subseteq E$.
\end{definition}
It can be seen that if $P$ is a $V$-$t$ path in $D(F)$ without shortcuts,
then operation ${\tt Augment}(P)$ changes forest $F_i$ to $F_i\oplus Q_i$ for some $F_i$-path $Q_i$.
Thus, Theorem~\ref{th:proc_augment}(b) will follow from the result below.
\begin{lemma}
If $Q=(x_1,y_1,\ldots,x_r,y_r)$ is an $X$-path then $X+\{x_1,\ldots,x_r\}-\{y_1,\ldots,y_r\}\in\calG$.
\end{lemma}
\begin{proof}
We use induction on $r$. For $r=1$ the claim follows directly from definitions; suppose that $r\ge 2$.
Denote $X'=X+x_1-y_1\in\calG$. It suffices to show that $Q'=(x_2,y_2,\ldots,x_r,y_r)$ is an $X'$-path;
the claim will then follow from the induction hypothesis. 

Consider $i\in[2,r]$. Applying the matroid exchange axiom to sets $X+x_1-y_1-y_i\subseteq X+x_1-y_1$ and $X+x_i-y_i$
(that are both in $\calG$) gives that either $X+x_1-y_i\in\calG$ or $X+x_1-y_1+x_i-y_i\in\calG$.
The former is impossible since $(x_1,y_j)\notin D_\calG(X)$, hence $X'+x_i-y_i\in\calG$, i.e.\ $(x_i,y_i)\in D_\calG(X')$.

Suppose that $(x_i,t)\in D_\calG(X')$ for some $i\in[2,r]$ with $y_i\ne t$,
i.e.\ $X'+x_i\in\calG$. Applying the matroid exchange axiom to sets
$X$ and $X+x_1-y_1+x_i$ gives that either $X+x_1\in \calG$ or $X+x_i\in\calG$.
We thus have either $(x_1,t)\in D_\calG(X)$ or $(x_i,t)\in D_\calG(X)$,
which both contradict condition (2).

Finally, suppose that $(x_i,y_j)\in D_\calG(X')$ for some $i,j\in[2,r]$ with $i<j$.
As shown above, we must have $y_j\ne t$. Applying the matroid exchange axiom to sets
$X-y_j$ and $X+x_1-y_1+x_i-y_j$ gives that either $X+x_1-y_j\in\calG$
or $X+x_i-y_j\in\calG$. We thus have either $(x_1,y_1)\in D_\calG(X)$
or $(x_1,y_j)\in D_\calG(X)$, which both contradict condition (2).
\end{proof}

\subsection{Proof of Lemma \ref{lem_q}}
\begin{proof}[Proof of Lemma \ref{lem_q}]
    Note that part {\rm (b)} easily follows from part {\rm (a)}.
    Indeed, the desired subpartition $\calS$ can be obtained algorithmically as follows:
    (i) start with the family of sets $\{Q_v\::\:\Def_F(v) > 0\}$; (ii) while there are overlapping sets $A,B$ in the family,
    replace them with $A\cup B$. (Clearly, if $A$ and $B$ are $F$-closed, then so is their union).
    
    We thus focus on proving {\rm (a)}.
    Let $L\subseteq E$ be the set of elements reachable from $v$ in the auxiliary graph for $F$.
    Note that for every $i\in[k]$ we have $L\subseteq E(C)$ where $C$ is the component of $F_i$ containing $v$, otherwise there would be a path $(v,\ldots,e,t)$
    in the auxiliary graph where edge $e$ is joining for $F_i$.
    Let $Q_v$ be the set of vertices of $L$ (or just $\{v\}$ if $L$ is empty, but this is a trivial case).
    To prove the lemma, it suffices to show $Q_v$ is $F$-closed.
    
    The proof of this fact will closely follow the argument from  \cite[Section 2.1]{gabow}.
First, let us show that each $F_i\cap L$ is spanning in $Q_v$. Assume the contrary. Let $Q_v = A \sqcup B$, where $A$ and $B$ are not connected in $F_i \cap L$. 
The structure of the auxiliary graph implies that  $L$ is connected. Thus, there exists an edge $e \in (L - F_i)$ connecting $A$ and $B$. We know that $e$ is not joining for $F_i$
(otherwise {\tt Search} would find an augmenting path, and $F$ would not be optimal).
 Therefore, there is a path in $F_i$ between $\head(e)$ and $\tail(e)$. Every edge $f$ of this path belongs to $L$ (since $(e,f)$ is in the auxiliary graph by construction). Then $A$ and $B$ are connected in $F_i \cap L$, which is a contradiction.

It remains to show that $\rho(Q_v)\subseteq F$. 
Suppose not, i.e.\ there exists $e\in \rho(Q_v)- F$.
Condition $\tail(e)\notin Q_v$ implies that $e\notin L$.
Denote $u=\head(e)$. If $u=v$ then $e\in L$ by the definition of the auxiliary graph - a contradiction.
If there exists $f\in F_i\cap L$ with $\head(f)=u$ then $(f,e)$ is in the auxiliary graph and hence $e\in L$ - a contradiction.
This shows that $u\ne v$ and $\indeg_{F\cap L}(u)=0$.
We can write $\indeg_{F\cap L}(Q_v)=\indeg_{F\cap L}(Q_v-\{u,v\})+\indeg_{F\cap L}(u)+\indeg_{F\cap L}(v)\le k(|Q_v|-2)+0+(k-1)=k(|Q_v|-1)-1$.
On the other hand, each $F_i\cap L$ is spanning in $Q_v$ and hence $\indeg_{F\cap L}(Q_v)\ge k(|Q_v|-1)$.
We obtained a contradiction.
\end{proof}

\subsection{Proof of Corollary \ref{cor:minmax:kforests}}
Consider subpartition $\calS$ of $V$ as in Lemma \ref{lem_q}. 
Summing eq.~\eqref{eq:Fclosed} over $A\in\calS$ and using the fact that sets $A\in\calS$ cover all deficient vertices, we get
    \begin{equation}
        \Def_F(V) = \sum_{A\in\calS} (k - \tau(A)- |\rho (A)|)
    \end{equation}
    On the other hand, consider an optimal subpartition $\mathcal{X}$ of $V$ that maximizes~\eqref{eq:minmax:kforests}. For each $A\in\calX$
    we have
    $
        \indeg_F (A) \leq k (|A| - 1) + |\rho (A)|
    $
    and therefore
    $
        \Def_F (A)\ge k - \tau(a) - |\rho (A)|
    $.
    This gives
    $$
    \Def_F(V) \ge \sum_{A\in\calX}\Def_F(A) \ge \sum_{A\in\calX} (k - \tau(A)- |\rho (A)|)
    $$

\subsection{Proof of Theorem \ref{theorem_low_deficit}}\label{appendix_low_deficit}

In this section, $\tau = {\bf 0}$.

Let us fix forests $\widetilde{F} = \widetilde{F}_1 \sqcup \ldots \sqcup \widetilde{F}_{k-1}$ 
which is an optimal solution for $k - 1$, and let $\calS=\{S_1,\ldots,S_q\}$
be the corresponding $\widetilde F$-closed sets constructed in Lemma~\ref{lem_q}. 
We denote $S =  S_1\cup\ldots\cup S_q$ and $\overline S=V-S$.
In the rest of the proof, we also let $F = F_1 \sqcup \ldots \sqcup F_k$
 be a solution for $k$ (not necessarily optimal) that is obtained from $\widetilde{F}$ via some sequence of augmentations.  
We will use the following notation:
\begin{itemize}
\item Let $F_{1 \ldots k-1} = F_1 \sqcup \ldots \sqcup F_{k-1}$.
\item Let $\C{F_k}$ be the set of components of forest $F_k$; it forms a partition of $V$.
\item Let $\CC{F_k}=\{C\in\C{F_k}\::\:C\subseteq \overline S\}$ be the set of components in $\C{F_k}$ not intersecting $S$. 
\item Finally, define $\CCd{F_k}=\{C\in \CC{F_k}\::\:\Def_F(C)=d\}$.
\end{itemize}

\begin{lemma}\label{lem_F}
Solution $F=F_1\sqcup\ldots\sqcup F_k$ satisfies the following:
\begin{itemize} 
\item[(a)] $\indeg_{F_{1\ldots k-1}}(C)\ge \indeg_{\widetilde F}(C)$ for every $C\in\C{F_k}$.
\item[(b)] Each $C\in \CC{F_k}$ satisfies $\Def_F(C)\le 1$ (and hence $\CC{F_k}=\CCzero{F_k}\cup \CCone{F_k}$).
\item[(c)] $|F_{1\ldots k-1}|\ge|\widetilde F|+|\CCzero{F_k}|$.
\end{itemize}
\end{lemma}
\begin{proof}
We use induction on $|F|$. After initialization we have $F_{1\ldots k-1}=\widetilde F$ and $F_k=\varnothing$.
By construction, $\CC{F_k}=\{\{v\}\::\:v\in\overline S\}$, every $\{v\}\in \CC{F_k}$
satisfies $\Def_F(v)=k-(k-1)=1$, and $\CCzero{F_k}=\varnothing$, implying (a)-(c).
Now suppose that the claim holds for $F$, and solution $F'$ is obtained from $F$
by a single augmentation that starts at vertex $v$ of component $C\in\C{F_k}$
and ends at edge $e$ which is joining for forest $p\in[k]$.
Note that we have $\indeg_{F'}(v)=\indeg_{F}(v)+1$ and $\indeg_{F'}(u)=\indeg_{F}(u)$ for $u\in V-\{v\}$.
This means, in particular, that 
components $C'\in \C{F'_k}$ not containing $v$
have the same deficits in $F$ and $F'$.
Also, $\Def_F(C)\ge \Def_F(v)\ge 1$.
By Theorem~\ref{th:proc_augment}(a), $|F'_p|=|F_p|+1$ and $|F'_i|=|F_i|$ for $i\in[k]-p$.
Two cases are possible.
\begin{itemize}
\item $p<k$. Then we have $\C{F'_k}=\C{F_k}$,
 $\Def_{F'}(C)=\Def_F(C)-1$, and $\mbox{$|F'_{1\ldots k-1}|=|F_{1\ldots k-1}|+1$}$.
Note that $|\CCzero{F'_k}|\le|\CCzero{F_k}|+1$.
It can now be checked that~(b,c) cannot become violated for $F'$.
Only edges in $E(C)\cup\{e\}$ could have changed their memberships,
therefore $\indeg_{F'_{1\ldots k-1}}(C)-\indeg_{F_{1\ldots k-1}}(C)\ge |F'_{1\ldots k-1}|-|F_{1\ldots k-1}|-1\ge 0$,
which implies property~(a) for $F'$.

\item $p=k$. Then $C$ is merged with some component $D\in \C{F_k}-\{C\}$ into $\mbox{$C'=C\cup D$}$.
We have $\C{F'_k}=(\C{F_k}-\{C,D\})\cup\{C'\}$,
$\Def_{F'}(C')=\Def_F(C)+\Def_F(D)-1$, and 
  $|F'_{1\ldots k-1}|=|F_{1\ldots k-1}|$.
Only edges inside $E(C')$ could have changed their memberships,
therefore $\indeg_{F'_{1\ldots k-1}}(C')-\indeg_{F_{1\ldots k-1}}(C')=|F'_{1\ldots k-1}|-|F_{1\ldots k-1}|= 0$.
Using induction hypothesis, we obtain
$\indeg_{F'_{1\ldots k-1}}(C')
=\indeg_{F_{1\ldots k-1}}(C)+\indeg_{F_{1\ldots k-1}}(D)
\ge \indeg_{\widetilde F}(C)+\indeg_{\widetilde F}(D)
=\indeg_{\widetilde F}(C')
$, which shows~(a). To show (b,c), we consider two subcases:

\underline{Case 1}: $C'\in \CC{F'_k}$, or equivalently $C'\subseteq \overline S$.
We then have $C,D\in \CC{F_k}$, and so the induction hypothesis yields $\Def_F(C)=1$ and $\Def_F(D)\le 1$.
Thus, (b) still holds (since $\Def_{F'}(C')=\Def_F(D)\le 1$),
and so does (c) (since $|F'_{1\ldots k-1}|=|F_{1\ldots k-1}|$ and $|\CCzero{F'_k}|=|\CCzero{F_k}|$).

\underline{Case 2}: $C'\notin \CC{F'_k}$. Then (b) cannot become violated, and (c) also cannot become violated
since $|F'_{1\ldots k-1}|=|F_{1\ldots k-1}|$ and $|\CCzero{F'_k}| \le |\CCzero{F_k}|$.

\end{itemize}
\end{proof}

\begin{lemma}
Each component $C\in\C{F_k}$ intersects at most one of the sets $S_i\in\calS$.
    \label{lem_components_dont_connect_s}
\end{lemma}
\begin{proof}
Let $t=|\{(S_i,C)\in\calS\times\C{F_k} \::\:S_i\cap C\ne\varnothing\}|$ be the number of non-empty intersections between sets $S_i$ and
components of $F_k$.
 Clearly, we have $|\C{F_k}| \le |\CC{F_k}| + t$. Furthermore, if the lemma is false (i.e.\ some $C\in\C{F_k}$ 
 has non-empty intersections with two distinct sets $S_i,S_{i'}\in\calS$)
 then $|\C{F_k}| \le |\CC{F_k}| + (t-1)$. To show the lemma, it thus suffices to prove that  $|\C{F_k}| \ge |\CC{F_k}| + t$.
 
   Let us split the edges into three sets:
    \begin{equation}
        \begin{cases}
            E_1 = E(S_1)\cup\ldots\cup E(S_q), \\
            E_2 = \rho(S_1)\cup\ldots\cup\rho(S_q), \\
            E_3 = \{e \in E \::\: \head(e) \in \overline{S}\}.
        \end{cases}
    \end{equation}
    Clearly, $E = E_1 \sqcup E_2 \sqcup E_3$.
    Denote $\Delta_i = |F \cap E_i| - |\widetilde{F} \cap E_i|$, then $|F|-|\widetilde F|=\Delta_1+\Delta_2+\Delta_3$.
We claim that quantities $\Delta_i$ can be upper-bounded as follows.
\begin{itemize}
\item $\Delta_1\le |S|-t$.
Indeed, for each $S_i$ we have $|F_k \cap E(S_i)| \le |S_i| - t_i$ where $t_i$ is the number of components $C\in\C{F_k}$ that intersect $S_i$.
Furthermore, $|F_{1\ldots k-1} \cap E(S_i)|\le  |\widetilde F \cap E(S_i)|$ since
$\widetilde{F}_1, \ldots, \widetilde{F}_{k-1}$ are already spanning in $S_i$.
This yields the desired claim as follows:
$$
\Delta_1=\sum_i\left(|F_k \cap E(S_i)|+|F_{1\ldots k-1} \cap E(S_i)|-|\widetilde F \cap E(S_i)| \right)
\le \sum_i\left( |S_i|-t_i\right) = |S| - t
$$
\item $\Delta_2\le 0$. This holds since each $S_i$ is $\widetilde F$-closed and hence all edges in $E_2$ are already covered by $\widetilde F$.
\item $\Delta_3 \leq |\overline{S}| - r + |F_{1 \ldots k-1}| - |\widetilde{F}|$.
Indeed,
for  $v\in\overline S$ let us denote $\Delta^v_3 = \indeg_F(v) - \indeg_{\widetilde{F}}(v)$,
then $\Delta^v_3\le k - (k-1)=1$.
Each component $C\in\CCone{F_k}$ contains
$|C|-1$ nodes $v$ with $\Delta^v_3=1$ and one node $v$ with $\Delta^v_3=0$.
Using Lemma~\ref{lem_F}(b,c), we can now obtain the desired claim as follows:
$$
\Delta_3=\sum_{v\in\overline S}\Delta^v_3
\le |\overline S|- |\CCone{F_k}|
= |\overline S| - |\CC{F_k}| + |\CCzero{F_k}|
\le  |\overline S| - |\CC{F_k}| + |F_{1\ldots k-1}|-|\widetilde F|
$$
\end{itemize}

Putting together the bounds above gives
\begin{align*}
|F|-|\widetilde{F}|=\Delta_1+\Delta_2+\Delta_3 
&\le (|S|-t) + 0 + (|\overline S|-|\CC{F_k}|+|F_{1 \ldots k-1}| - |\widetilde{F}|) \\
&=|V|-t-|\CC{F_k}|+|F_{1\ldots k-1}|-|\widetilde{F}|
\end{align*}
and therefore
$
|F_k|=|F|-|F_{1\ldots k-1}|\le |V|-t-|\CC{F_k}|
$.
On the other hand, $|F_k|=|V|-|\C{F_k}|$, and hence  $|\C{F_k}|\ge |\CC{F_k}|+t$. This concludes the proof.
\end{proof}

Now we are ready to prove Theorem \ref{theorem_low_deficit}.

\begin{proof}[Proof of Theorem \ref{theorem_low_deficit}]
    Assume there exists  component $C\in\C{F_k}$ with $\Def_F(C)>k$,
    or equivalently $\indeg_{F}(C)< k|C|-k$. Since there are $|C| - 1$ edges of $F_k$ in this component,
    we have $\indeg_{F_k}(C)= |C|-1$.
    Using Lemma~\ref{lem_F}(a), we conclude that $\indeg_{\widetilde F}(C)\le \indeg_{F_{1\ldots k-1}}(C)
    <(k|C|-k)-(|C|-1)=(k-1)|C|-k+1$ and hence $\Def_{\widetilde F}(C)>k-1$.
 By Lemma \ref{lem_components_dont_connect_s}, there exists $S_i\in\{S_1,\ldots,S_q,\varnothing\}$ such
 that each vertex from $C - S_i$ has zero deficit in $\widetilde F$, 
 and so $\Def_{\widetilde F}(S_i) = \Def_{\widetilde F}(C) > k - 1$.
 By the definition of $\calS$, we have $S_i\subseteq \widetilde C$ for some component $\widetilde C$ of $\widetilde F_{k-1}$.
 We obtain that $\Def_{\widetilde F}(\widetilde C)\ge \Def_{\widetilde F}(S_i)>k-1$, which is a contradiction.
\end{proof}

\subsection{Proof of Theorem \ref{th_complexity_main_algo}}
There are two phases: growing forests for $k\le k_G$, and growing forests for $k>k_G$.
Below we analyze the times for these operations excluding the time for computing indices $p$ in Algorithm~\ref{proc_augment}. 

(1) {\em Growing $k$-th forest for $k\le k_G$.} All forests $F_1,\ldots,F_{k-1}$ are spanning trees.
We claim that there will be no unsuccessful searches during this phase. 
    (If some search from node $v$ is unsuccessful, then Lemma~\ref{lem_q} gives an $F$-closed set $Q_v\ni v$ with  $Q_v\ne V$;
    the latter holds since $F_k$ is not a spanning tree.
    Eq.~\eqref{eq:Fclosed} gives $|\rho(Q_v)|=k-\tau(Q_v)-\Def_F(Q_v)<k\le k_G$, which is a contradiction.)
    Therefore, at each round of the \textbf{while} loop, the number of components of $F_k$ halves, so there will be $O(\log n)$ rounds of the loop. This yields  $O(m \log n)$ time for the $k$-th forest. Notice that this already settles the case of $k_G \geq k$, giving the desired bound $O(k m \log n)$, since $\delta = 1$ in this case.
    
(2) {\em Growing $k$-th forest for $k> k_G$.}
At line 4 each component $C\in\mathcal{C}$ satisfies $\mdef(C)= \Def_F(C)\le k$ (by Theorem~\ref{theorem_low_deficit}),
    implying that $|\mathcal{C}|\ge \frac{\mdef(V)}{k}$.
    At least half of the components in $\mathcal{C}$ will be {\tt Search}'ed,
    hence  each round of the \textbf{while} loop decreases $\mdef(V)$ by at least $\frac{\mdef(V)}{2k}$.
    Therefore, there will be at most $- \log_{1 - \frac{1}{2k}} (k n)= O(k \log (k n))=O(k\log n)$ such rounds
    (since $k=O({\tt poly}(n))$, as assumed in Section~\ref{sec:intro}).
    Each round takes $O(\sum_{C\in\mathcal{C}}|E(C)|)=O(m)$ time.
    This sums to $O(km\log n)$ time for growing $k$-th forest (lines 2-7).     

In total, the operations above take $k_G\cdot O(m\log n) + \delta\cdot O(k m\log n)=O(k\delta  m \log n)$ time for the final value of $k$.
It remains to discuss the time for computing indices $p=\min\{j \; | \; e_r \text{ is joining for } F_j \}$
during augmentations. There are $k_G (n-1)$ augmentations in the first phase and
$O(\delta n)$ augmentations in the second phase.
In the first phase we need $O(1)$ time per augmentation (since we always have $p=k$).
In the second phase for each augmentation we need to perform 
the following operation $k$ times\footnote{With a binary search one could improve this to $O(\log k)$ times, but the looser bound suffices for our purposes.}: determine whether a given edge $e$ is joining for a given forest $F_j$. If we maintain a disjoint sets data structure for each forest, then each such operation takes $O(\alpha(n))$ amortized time~\cite{union_find} (where $\alpha(\cdot)$ is the inverse Ackermann function).
In total, this yields $k_G(n-1)\cdot O(1) + O(\delta n)\cdot O(k\alpha( n))$ time, which is dominated by $O(k\delta m\log n)$.


\section{Proofs from section \ref{sec:problem_k_forest}}\label{appendix_k_forests}

\subsection{Proof of Theorem \ref{th_pseudoforests_via_maxflow}}
    Fix $k$ and $G$. We are going to define an auxiliary graph $G^*$. This definition appeared in \cite{gabow_polymatroids}.

    \begin{definition}
        $G^*$ is a directed weighted graph with a vertex set $\{s\} \sqcup E \sqcup V \sqcup \{t\}$, and the following edges:
        \begin{enumerate}
            \item $(s, e)$ for all $e \in E$, capacity 1,
            \item $(e, v), (e, u)$ for all $e = (v, u) \in E$, $v, u \in V$, capacity 1,
            \item $(v, t)$ for all $v \in V$, capacity $k$.
        \end{enumerate}
    \end{definition}
    
    By the construction, $G^*$ has $O(m)$ vertices and $O(m)$ edges.

    First, notice that there is a one-to-one correspondence between integer $s-t$ flows in $G^*$ and subgraphs with an orientation such that the indegrees of vertices are at most $k$. Indeed, let the subgraph consist of those edges $e$ that have a flow through $(s, e)$. The orientation of $e = (u, v)$ is decided like this: if the flow flows through $(e, u)$, then $e$ gets oriented from $v$ to $u$, and if the flow flows through $(e, v)$, then $e$ gets oriented from $u$ to $v$. The flow through edges of type $(v, t)$ equals the indegree of $v$ under this orientation. Therefore, the capacity $k$ of the edges of type $(v, t)$ ensures that the indegrees of vertices are at most $k$. In the same way, one can construct a flow in $G^*$ given a subgraph oriented subject to the indegree condition.

    Take the oriented subgraph $H$ and send the corresponding flow through $G^*$. After that, run an integer max flow
    \footnote{
If the max flow routine outputs a fractional-valued flow, it can be appropriately rounded to an integer flow with the same cost in almost linear time \cite{flowrounding}.} 
through the residual graph, obtaining a max flow in $G^*$. Let $P$ be the oriented subgraph encoded by the resulting flow. We make two claims.

    First, $P$ is a maximum subgraph satisfying the indegree condition. This is because the value of the flow equals the number of edges in $P$, and thus $|P|$ is maximum since the flow is maximum.

    Second, $H \subseteq P$. This is because the residual edges of the form $(e, s)$ for $e \in H$ are not used in the flow through the residual graph.\footnote{To ensure that there edges are not used, one can delete these edges from the residual graph before running the max flow. It will not affect the optimal value of the flow.} Therefore, such $(s, e)$ will still have flow in the final flow.

    Together, these two claims show that the described procedure solves the pseudoforest problem.

\subsection{Proof of Lemma \ref{lemma:contract}}
Part {\rm (a)} is straightforward, as well as the feasibility of solution $F'_1\sqcup T_1,\ldots,F'_k\sqcup T_k$
in {\rm (b)}. We need to prove optimality of the latter solution.

    Denote $F' = F'_1 \sqcup \ldots \sqcup F'_k$, $T = T_1 \sqcup \ldots \sqcup T_k$. Let $\mathcal{G}$ and $\mathcal{G}'$ be the graphic matroid corresponding to $G$ and $G'$. Using the matroid union theorem,
    \begin{equation*}
        |F'| = \min_{H \subseteq E'} \{ |E' - H| + k \cdot \text{rk}_{\mathcal{G}'}(H) \}.
    \end{equation*}
    Let $H^* \subseteq E'$ be a set that attains the minimum. Let $c$ be the number of connected components of $H^*$ in $G'$. Easy to see that $H^*$ also has $c$ connected components in $G$. Moreover, the number of connected components of $H^* + E(U)$ is either $c$ or $c + 1$, and thus, at least $c$. So,
    \begin{equation}
        |F'| = |E' - H^*| + k \cdot \text{rk}_{\mathcal{G}'}(H^*) = |E' - H^*| + k (n - |U| + 1 - c).
    \end{equation}
    
    Now use the matroid union theorem again for $\mathcal{G}^k$ (the union of $k$ matroids $\mathcal{G}$):.
    \begin{align*}
            \text{rk}_{\mathcal{G}^k}(E) &= \min_{H \subseteq E} \{ |E - H| + k \cdot \text{rk}_{\mathcal{G}}(H) \}  \\
            &\leq |E - H^* - E(U)| + k \cdot \text{rk}_{\mathcal{G}}(H^* + E(U)) \leq |E' - H^*| + k (n - c)  \\
            &= |F'| + k (|U| - 1) = |F' + T|.
    \end{align*}
    Therefore, $F' \sqcup T$ is optimal.

\subsection{Proof of Lemma \ref{lemma:main-analysis}}
The first equality in~\eqref{eq:diff-change} follows directly from Lemma~\ref{lemma:contract}.
We thus focus on proving the last inequality in~\eqref{eq:diff-change}.

For the analysis, we can assume w.l.o.g.\ that orientation of edges in $F$ at line 3
coincides with the orientation in $P$. (This is a valid orientation since $\indeg_F(v)\le \indeg_P(v)\le k$ for each $v$,
and reorienting edges of $F$ does not affect the output of {\tt Pseudoforests}$(G, F, k)$ at line 4.)

Note that at line 5
the bounded degree constraint in Problem~\ref{the_one_problem} will never be active (since ${\tt indeg}_P(v)\le k$ for all $v\in V$).
Therefore, this procedure is equivalent to solving the $k$-{\sc forest} problem in the undirected version of graph $(V,P)$.
$F$ is a feasible solution, so by the property
of matroids there exists an optimal solution containing $F$.
For the purpose of proving the last inequality in~\eqref{eq:diff-change}
we can thus assume w.l.o.g.\ that $F\subseteq H \subseteq P$.
Denote $\overline H=P-H$, 
and    let $\calS = \{S_i\}_{i = 1}^q$ be the sets constructed in Lemma~\ref{lem_q}(b)
    for the optimal solution $H$ in $P$. 
    Note that each $S\in\calS$ is $H$-closed (as in Definition~\ref{def:closed}).
    We make the following claims.
    \begin{itemize}
    \item $\overline H\subseteq\bigcup_{S\in\calS}E(S)$. Indeed, consider $e=(u,v)\in \overline H$.
    We have $\indeg_H(v)\le \indeg_P(v)-1\le k-1$, so by the condition (ii) of Lemma~\ref{lem_q}(b)
    there exists $S\in\calS$ with $v\in S$.
    $H$-closedness of $S$ now implies that $u\in S$.
    \item For each $S\in\calS$, $E(S)$ contains at least one edge of $H-F$.
    Indeed, if not then each $F_i$ is a spanning tree when restricted to $E(S)$ (since $S$ is $H$-closed),
    and so $E(S)$ is a clump of $F$. However, the previous iteration contracted the top clump,
    so $F$ is clump-free by Lemma~\ref{lemma:clump-free} - a contradiction.
    \item For each $S\in\calS$, $E(S)$ cannot contain more than $k$ edges of $\overline H$.
    Indeed, we have $\indeg_P(S) \leq k |S|$ and also $\indeg_{H}(S) \ge k |S| - k $
	since $S$ is $H$-closed. Therefore, 
    \begin{equation*}
     \indeg_{\overline H}(S)=\indeg_P(S) - \indeg_{H}(S)
     \le k|S| - (k|S|-k)=k
    \end{equation*}
    which implies the claim.
    \end{itemize}
    The last two facts imply that $|\overline H\cap E(S)|\le k\cdot|(H-F)\cap E(S)|$ for each $S\in\calS$.
    Summing this over $S\in\calS$ and using the first fact gives $|\overline H|\le k\cdot |H-F|$,
    or equivalently $|P|-|H|\le k (|H| - |F|)$. 
    
$H$ is a feasible solution to the $k$-{\sc forest} problem in $G$, so
by the property of matroids there must exist an optimal solution $F^\ast$ (with $|F^\ast|={\tt OPT}(G)$)
such that $H\subseteq F^\ast$. The set of edges $F^\ast$ with an appropriate orientation
is a feasible solution to the pseudoforest problem at line 4, 
therefore ${\tt OPT}(G)=|F^\ast|\le|P|$. Combining it with the previous inequality gives
${\tt OPT}(G)-|H|\le k (|H| - |F|)$.
Finally, adding a constant to both sides yields
    \begin{equation*}\label{eq:kPH}
     (k+1)({\tt OPT}(G)-|H|) \le k({\tt OPT}(G)-|F|)
     \qquad\Longrightarrow\qquad
     {\tt OPT}(G)-|H|\le \tfrac{k}{k+1}\cdot({\tt OPT}(G)-|F|)
    \end{equation*}

\bibliographystyle{plain}
\bibliography{bibliography}

\end{document}